\numberwithin{equation}{section}
\newtheorem{theorem}{Theorem}[section]
\newtheorem{proposition}{Proposition}[section]
\newtheorem{identity}{Identity}
\theoremstyle{remark}
\newtheorem{rem}{\sl Remark}
\newcommand{\vast}{\bBigg@{3}}
\newcommand{\Vast}{\bBigg@{4}}
\newcommand{\bra}[1]{\langle\,#1\,|}
\newcommand{\ket}[1]{|\,#1\,\rangle}
\newcommand{\moy}[1]{\langle\,#1\,\rangle}
\def\tr{\operatorname{tr}}
\newcommand{\End}{\operatorname{End}}
\DeclareMathSymbol{\Alpha}{\mathalpha}{operators}{"41}
\DeclareMathSymbol{\Beta}{\mathalpha}{operators}{"42}
\DeclareMathSymbol{\Epsilon}{\mathalpha}{operators}{"45}
\DeclareMathSymbol{\Zeta}{\mathalpha}{operators}{"5A}
\DeclareMathSymbol{\Eta}{\mathalpha}{operators}{"48}
\DeclareMathSymbol{\Iota}{\mathalpha}{operators}{"49}
\DeclareMathSymbol{\Kappa}{\mathalpha}{operators}{"4B}
\DeclareMathSymbol{\Mu}{\mathalpha}{operators}{"4D}
\DeclareMathSymbol{\Nu}{\mathalpha}{operators}{"4E}
\DeclareMathSymbol{\Omicron}{\mathalpha}{operators}{"4F}
\DeclareMathSymbol{\Rho}{\mathalpha}{operators}{"50}
\DeclareMathSymbol{\Tau}{\mathalpha}{operators}{"54}
\DeclareMathSymbol{\Chi}{\mathalpha}{operators}{"58}
\DeclareMathSymbol{\omicron}{\mathord}{letters}{"6F}
\begin{document}

LPENSL-TH-08/22

%\vspace{45pt} \today

\bigskip

\bigskip

\begin{center}
%{\Large Title 1}

%{\Large Matrix elements of local operators for the most general \textquotedblleft homogeneous\textquotedblright\ XXZ spin 1/2 open chains}

%{\Large Title 2}

\textbf{\Large On correlation functions for the open XXZ chain with non-longitudinal boundary fields : the case with a constraint} \vspace{45pt}
\end{center}

\begin{center}
{\large \textbf{G. Niccoli}\footnote{{Univ Lyon, Ens de Lyon, Univ
Claude Bernard, CNRS, Laboratoire de Physique, F-69342 Lyon, France;
giuliano.niccoli@ens-lyon.fr}} }, {\large \textbf{V. Terras}\footnote{{Université Paris-Saclay, CNRS,  LPTMS, 91405, Orsay, France; veronique.terras@universite-paris-saclay.fr}} }
\end{center}

\begin{center}
\vspace{45pt}

\today
\vspace{45pt}
\end{center}

\begin{abstract}
This paper is a continuation of \cite{NicT22}, in which a set of matrix elements of local operators was computed for the XXZ spin-1/2 open chain with a particular case of unparallel boundary fields. Here, we extend these results  to the more general case in which both fields are non-longitudinal and related by one constraint, allowing for a partial description of the spectrum by usual Bethe equations.
More precisely, the complete spectrum and eigenstates can be characterized within the Separation of Variables (SoV) framework. One uses here the fact that, under the constraint, a part of this SoV spectrum can be described via solutions of a usual, homogeneous, $TQ$-equation, with corresponding transfer matrix eigenstates coinciding with generalized Bethe states. We explain how to generically compute the action of a basis of local operators on such kind of states, and this under the most general boundary condition on the last site of the chain. As a result, we can compute the matrix elements of some of these basis elements in any eigenstate described by the homogenous $TQ$-equation. Assuming, following a conjecture of Nepomechie and Ravanini, that the ground state itself can be described in this framework, we obtain multiple integral representations for these matrix elements in the half-infinite chain limit, generalizing those previously obtained in the case of longitudinal boundary fields and in the case of the special boundary conditions considered in \cite{NicT22}.
\end{abstract}

\begin{center}
\vspace{45pt}
\end{center}

\newpage

\tableofcontents

%\newpage

\section{Introduction}

This paper is a continuation of our previous work \cite{NicT22}, in which a set of matrix elements of local operators on the first $m$ sites of the chain was computed for the XXZ spin-1/2 open chain with a particular case of unparallel boundary fields: there was considered the case of a fixed longitudinal boundary field on the last site of the chain, with a completely generic, a priori non-longitudinal, boundary field on the first site of the chain. Here, we instead consider the case in which both boundary fields are non-longitudinal but are related by one constraint \cite{Nep04} which allows for the description of a part of the spectrum and eigenstates by usual
$TQ$ and Bethe equations.

There exists a large literature on open quantum spin chain \cite%
{AlcBBBQ87,Skl88,GhoZ94,JimKKKM95,JimKKMW95,SkoS95,KapS96,FanHSY96,Nep02,CaoLSW03,Nep04,NepR03,NepR04,deGE05,Bas06,BasK07,YanZ07,FilK11,KitKMNST07,KitKMNST08,FraSW08,CraRS10,CraRS11,FraGSW11,Pro11,Nic12,CaoYSW13b,FalN14,FalKN14,KitMN14,BelC13,Bel15,BelP15,BelP15b,AvaBGP15,BelP16,KitMNT17,MaiNP17,KitMNT18,GriDT19,BelPS21}.
These models have indeed numerous applications: they are useful to model 
out-of-equilibrium and transport properties in quantum condensed matter physics, see for instance \cite{Pro11}; they are also related to classical stochastic models, such as
asymmetric simple exclusion models \cite{deGE05}.
The interest in open integrable quantum spin chains lays notably on the 
freedom that we have in choosing its boundary conditions while keeping the
system integrable. 

The first results on the exact description of the Hamiltonian spectrum and eigenstates were obtained for longitudinal boundary fields, i.e. when both boundary fields are along the z-direction, in  \cite{AlcBBBQ87}   within the coordinated Bethe ansatz framework, and in  \cite{Skl88} within the algebraic Bethe ansatz (ABA) framework. In \cite{Skl88} a complete algebraic formalism for the study of these open spin chains was also introduced, issued from  the Quantum Inverse Scattering Method (QISM) \cite{FadS78,FadST79,FadT79}, and based on the representation theory of the reflection algebra \cite{Che84}. Although the case of completely arbitrary boundary magnetic fields is a priori integrable, in the sense that one can construct, within the algebraic framework introduced in \cite{Skl88}, a one-parameter family of commuting transfer matrices, the explicit construction of its common spectrum and eigenstates for non-longitudinal boundary fields has remained  widely open for a long period. In that case, the ferromagnetic state can no longer be used as a reference state and algebraic Bethe ansatz is therefore no longer directly applicable. It was however noticed in \cite{Nep04} that, under a unique constraint on the six boundary parameters parametrizing the six components of the two boundary fields, one can still write a system of usual Bethe equations, which unfortunately provides in general only an incomplete description of the transfer matrix spectrum. Conjectures, based on numerical studies, were made in \cite{NepR03} on the fact that these Bethe equations yield the ground state at half-filling, and in \cite{NepR04} on the fact that one could nevertheless recover the full spectrum by combining two different sectors with different constraints related by some symmetry  of the spin chain. A construction of the corresponding Bethe states via a generalization of algebraic Bethe ansatz based on the use of (a trigonometric version of) Baxter's Vertex-IRF transformation \cite{Bax73a} was proposed in \cite{CaoLSW03}, see also \cite{FanHSY96,YanZ07,FilK11}. However, it was not possible so far to compute correlation functions within ABA for this case with the constraint due to the difficulty of constructing Bethe states in both the direct and dual spaces within a common framework. As for the completely general case without the constraint, some progresses have been made more recently by the means of alternative approaches: off-diagonal Bethe ansatz \cite{CaoYSW13b}, modified Bethe ansatz \cite{BelC13,Bel15,BelP15,BelP15b,AvaBGP15,BelP16,BelPS21}, or Separation of Variables (SoV) \cite{FraSW08,Nic12,FalKN14,KitMN14,KitMNT18}. However, the description of the spectrum proposed in these contexts does not involve usual $TQ$ and Bethe equations, but modified ones, with some additional inhomogeneous term, a description which is a priori difficult to deal with when considering the computation of physical quantities such as correlation functions and the thermodynamic limit of the model.

Hence, in the present work, we restrict our study to the case in which the boundary fields are both non-longitudinal but related by the aforementioned constraint, for which at least part of the spectrum can be described by usual Bethe equations. We suppose moreover that the boundary conditions are such that the ground state of the model is among the states which can be described in this framework, in a sector close to half-filling (see the conjectures of \cite{NepR03,NepR04}), so that it can be described in the thermodynamic limit by the same density function as in the longitudinal case with possibly some additional boundary roots, see \cite{SkoS95,KapS96,GriDT19}. 
We develop our study  in the framework of the SoV approach \cite%
{Skl85,Skl85a,Skl90,Skl92,Skl95,Skl96,BabBS96,Smi98a,Smi01,DerKM01,DerKM03,DerKM03b,BytT06,vonGIPS06,FraSW08,AmiFOW10,NicT10,Nic10a,Nic11,FraGSW11,GroMN12,GroN12,Nic12,Nic13,Nic13a,Nic13b,GroMN14,FalN14,FalKN14,KitMN14,NicT15,LevNT16,NicT16,KitMNT16,JiaKKS16,KitMNT17,MaiNP17,KitMNT18}%
, pioneered by Sklyanin \cite{Skl85,Skl85a,Skl90,Skl92,Skl95,Skl96}.
The advantages of using this approach is that it provides, on the
one hand, a complete characterization of eigenvalues and eigenstates of the
open spin 1/2 XXZ chain under the most general integrable boundary conditions%
\footnote{%
This is in particular the case in the framework of the recent rederivation
of SoV \cite{MaiN18,MaiN19c} relying on the pure integrable structure of the
models and allowing for a wide extension of its applicability even to higher
rank cases \cite{MaiN18,RyaV19,MaiN19,MaiN19d,MaiN19b,MaiN19c,MaiNV20,RyaV20}%
; see also \cite{Skl96,Smi01,MarS16,GroLS17} for different precedent higher
rank developments.} and, on the other hand, it naturally produces
determinant formulae \cite{KitMNT18} for the scalar products of \textit{%
separate states} \cite%
{GroMN12,Nic12,Nic13,Nic13a,Nic13b,GroMN14,FalN14,FalKN14,LevNT16,KitMNT16,KitMNT17,MaiNP17,KitMNT18}%
\footnote{%
This is at least the case for the rank one models. In \cite{MaiNV20b}, it has
been shown how these types of formulae extend to the gl(3) higher rank case,
see also the interesting and recent papers \cite{CavGL19,GroLRV20}.}, of
which the transfer matrix eigenstates are special instances. Moreover, in this approach, the description of the spectrum in terms of Bethe equations and ABA-type construction of the eigenstates emerge explicitly as particular simplifications due to the constraint within the global approach, so that we can also make use of these results when it is more convenient to do so. 

Our purpose is here to compute boundary correlation functions at zero temperature, or more precisely the mean values in the ground state of local operators on the first $m$ sites of the chain. Such kinds of results were first obtained in the periodic case \cite{JimM95L,JimMMN92,JimM96,KitMT99,KitMT00}, and were at the origin of a long series of impressive new and exact results concerning correlation functions \cite{KitMST02a,KitMST05a,KitMST05b,KitKMST07,GohKS04,GohKS05,BooGKS07,GohS10,DugGKS15,GohKKKS17,KitKMST09a,KitKMST09b,KitKMST09c,KozMS11a,KozMS11b,KozT11,KitKMST11a,KitKMST11b,KitKMST12,DugGK13,KitKMT14,CauHM05,CauM05,PerSCHMWA06,BooJMST05,BooJMST06,BooJMST06a,BooJMST06b,BooJMST07,BooJMST09,JimMS09,JimMS11,MesP14,Poz17,BabGKS21}. These results could be generalized to the case of an open chain with longitudinal boundary fields in \cite{JimKKKM95,JimKKMW95,KitKMNST07,KitKMNST08}. For more general boundary fields or other types of boundary conditions, the problem has for long remained unsolved due to the aforementioned difficulties in having a good description of the spectrum and eigenstates.  These limitations were overcome only recently within the SoV framework in \cite{Nic21} for the open XXX chain with unparallel boundary fields and in \cite{NicT22} for the open XXZ chain with one arbitrary non-longitudinal boundary field, the other being fixed along the z-direction (see also \cite{NicPT20} for the XXX chain with anti-periodic boundary conditions). Here, we generalize the results of \cite{NicT22} to the case where both fields are non-longitudinal within the constraint, that is, we enlarge our
results from the case of three free boundary parameters \cite{NicT22} to the case of five
free boundary parameters. As in \cite{NicT22}, we however have, due to technical difficulties related to the use of the vertex-IRF transformation, to restrict ourselves to the consideration of a subclass of matrix elements.

The content of the paper is the following. 
In section~\ref{sec-open-XXZ}, we briefly recall the solution of the open XXZ spin 1/2 chain with non-longitudinal boundary fields: we recall the general algebraic framework, the SoV characterization of the transfer matrix spectrum and eigenstates, and the simplifications that occur when one imposes the constraint of \cite{Nep04} on the boundary parameters. In this case, part of the spectrum can be characterized in terms of a homogeneous $TQ$-equation, and the corresponding eigenstates -- and more generally all separate states associated with a polynomial function $Q$ --  can be reformulated in terms of well-identified generalized Bethe states. The latter are constructed, as in \cite{CaoLSW03}, from a gauged transformed version of the reflection algebra, by means of the Vertex-IRF transformation for particular values of the corresponding gauged parameters.
In section~\ref{sec-Bb}, we explain how to decompose these generalized boundary Bethe states into generalized bulk Bethe states for any arbitrary boundary field on the last site $N$ of the chain, i.e. for any form of the associated boundary matrix which is thus a priori non-diagonal.  This represents a strong generalization of the boundary-bulk decomposition derived in \cite{NicT22}, since there was only considered the case of a particular longitudinal boundary field at site $N$, corresponding to a particular diagonal boundary matrix. Note that this boundary-bulk decomposition is a central technical point for the computation of correlation functions, since we do not know at the moment how to act directly with local operators on boundary states, but only on bulk states.  Hence, the form of the boundary-bulk decomposition has to be simple enough so that we can reconstruct the result of the action in terms of boundary states. 
%In \cite{KitKMNST07,NicT22}, we used such boundary-bulk decomposition, but only for a longitudinal boundary field on the last site of the chain, i.e. for a diagonal boundary matrix.  
The idea is here to adjust some gauge parameters of the Vertex-IRF transformation so as to make this decomposition as simple as in the diagonal case. Quite remarkably, the choice that we have to make on the gauge parameters is compatible with the constraint, so that we in fact obtain a decomposition of separate states into bulk Bethe states on which we can act with local operators completely similarly as in \cite{NicT22}.
The result of such action is then given in section~\ref{sec-act}. There, we consider the same basis of local operators on the first $m$ sites of the chain as in our previous work \cite{NicT22}. The result of the action of the elements of this basis on the boundary Bethe states/separate states obtained in section~\ref{sec-act} is very similar, in its form, as the one derived in \cite{NicT22}, with however different hypothesis: a more general boundary field and more constraint gauge parameters.
This enables us to compute, by using the scalar product formulas derived in \cite{KitMNT18}, the matrix elements of these local operators in any transfer matrix eigenstate that can be obtained by a solution of the homogenous $TQ$-equation under the constraint \cite{Nep04}. However, as in \cite{NicT22}, we have to restrict ourselves to the subset of local operators that preserve the number  of gauged boundary $B$-operators. The result of this action for the finite chain is presented in section~\ref{corr-finite}.
Finally, in section~\ref{corr-infinite}, we derive multiple integral representations for these matrix elements in the ground state and in the half-infinite chain limit. %There, we do not implement a full analysis of the ground state under these general boundary conditions and 
We there make the hypothesis (based on the conjecture of \cite{NepR03,NepR04}) that the ground state can be obtained from a solution of the homogeneous $TQ$-equation close to half-filling, and can therefore be described by the same density function as in the diagonal case with possibly some additional boundary roots related to the four boundary parameters that appear in the Bethe equations.
%introduce our conjecture on the form of the thermodynamic ground state distribution of Bethe roots for the \textquotedblleft homogeneous\textquotedblright\ boundary conditions, here considered, and on the basis of it we derive multiple integral formulae for correlation functions of these semi-infinite chains.

%%%%%%%%%%%%%%%%%%%%%%%%
\section{The open spin-1/2 XXZ quantum chain}\label{sec-open-XXZ}

In this paper, we consider the open XXZ spin-1/2 quantum chain coupled with local external fields on sites 1 and $N$. The Hamiltonian of this model is given on $\mathcal{H}=\otimes _{n=1}^{\mathsf{N}}\mathcal{H}_{n}$, $\mathcal{H}_{n}\simeq \mathbb{C}^{2}$, as
\begin{equation}\label{Ham}
    H =\sum_{n=1}^{N-1} \Big[ \sigma _n^{x} \sigma _{n+1}^{x}
    +\sigma_n^{y}\sigma _{n+1}^{y}
    +\Delta \, \sigma _n^{z} \sigma _{n+1}^{z}\Big]
    +\sum_{a\in\{x,y,z\}}\Big[h_-^a\,\sigma_1^a+h_+^a\sigma_N^a\Big].
\end{equation}
Here $\sigma_n^\alpha$, $\alpha\in\{x,y,z\}$ stand for the usual Pauli matrices acting on $\mathcal{H}_{n}$. The anisotropy $\Delta$ of the coupling is parameterized as
\begin{equation}
   \Delta=\cosh\eta,
\end{equation}
and the boundary fields $h_\pm$ as
\begin{align}
  &h_\pm^x=2\kappa_\pm \, \sinh\eta\,\frac{ \cosh\tau_\pm}{\sinh\varsigma_\pm}
                  =\sinh\eta\, \frac{\cosh\tau_\pm}{\sinh\varphi_\pm\,\cosh\psi_\pm},\\
  &h_\pm^y=2i\kappa_\pm \, \sinh\eta\,\frac{ \sinh\tau_\pm}{\sinh\varsigma_\pm}
                  =i\sinh\eta\, \frac{\sinh\tau_\pm}{\sinh\varphi_\pm\,\cosh\psi_\pm},\\
  &h_\pm^z=\sinh\eta\,\coth\varsigma_\pm %\frac{ \cosh\zeta_\pm}{\sinh\zeta_\pm}
                  =\sinh\eta\, \coth\varphi_\pm\,\tanh\psi_\pm,
                  %\frac{\cosh\alpha_\pm\, \sinh\beta_\pm}{\sinh\alpha_\pm\, \cosh\beta_\pm},
\end{align}
where the two sets of boundary parameters are related by
\begin{equation}
   \sinh\varphi_\pm\, \cosh \psi_\pm =\frac{\sinh\varsigma_\pm}{2\kappa_\pm},
   \qquad
   \cosh \varphi_\pm\, \sinh\psi_\pm =\frac{\cosh\varsigma_\pm}{2\kappa_\pm}.
\label{reparam-bords}
\end{equation}

The spectral problem of an inhomogeneous version of the Hamiltonian \eqref{Ham} with non-longitudinal boundary fields can be solved by the quantum version of the Separation of Variables (SoV) approach \cite{FraSW08,Nic12,FalKN14,KitMNT18,MaiN19}, in the algebraic QISM framework of the representation theory of the reflection algebra \cite{Che84,Skl88}. We here use the same definitions and notations as in our previous paper \cite{NicT22}, in which this approach was reviewed, so that we give here only the necessary notations for the purpose of the present paper and refer to \cite{NicT22} for details.

%%%%%%%%%%%%%%%%
\subsection{General framework}

Let us define the boundary monodromy matrix $\mathcal{U}_-(\lambda )\equiv\mathcal{U}_{-,0}(\lambda )\in\End(V_0\otimes\mathcal{H})$, $V_0\simeq \mathbb{C}^2$, as
\begin{equation}
\mathcal{U}_{-,0}(\lambda )
=T_0(\lambda )\,K_{-,0}(\lambda )\,\hat{T}_0(\lambda )=%
\begin{pmatrix}
\mathcal{A}_{-}(\lambda ) & \mathcal{B}_{-}(\lambda ) \\ 
\mathcal{C}_{-}(\lambda ) & \mathcal{D}_{-}(\lambda )%
\end{pmatrix}%
,  \label{def-U+}
\end{equation}
in terms of
\begin{align}
   &T_{0}(\lambda )=R_{01}(\lambda -\xi _{1}-\eta /2)\dots R_{0N}(\lambda -\xi_{N}-\eta /2), \label{T-mon}\\
   &\hat{T}_{0}(\lambda ) =(-1)^{N}\,\sigma _{0}^{y}\,T_{0}^{t_{0}}(-\lambda)\,\sigma _{0}^{y}  
   %\notag \\   &\hphantom{\hat{T}_{0}(\lambda )}
   =R_{0N}(\lambda +\xi _{N}-\eta /2)\dots R_{01}(\lambda +\xi _{1}-\eta /2),
\label{That}
\end{align}
and of
\begin{equation}\label{Km}
 K_-(\lambda)\equiv K_{-,0}(\lambda )=K(\lambda ;\varsigma _{+},\kappa _{+},\tau _{+}),
%\qquad K_{+,0}(\lambda )=K(\lambda +\eta ;\varsigma _{-},\kappa _{-},\tau _{-}),
\end{equation}
The bulk monodromy matrix with inhomogeneity parameters $\xi_1,\ldots,\xi_N$ \eqref{T-mon} is defined in terms of the 6-vertex trigonometric $R$-matrix,
\begin{equation}
R(\lambda)\equiv R_{12}(\lambda )=%
\begin{pmatrix}
\sinh (\lambda +\eta ) & 0 & 0 & 0 \\ 
0 & \sinh \lambda & \sinh \eta & 0 \\ 
0 & \sinh \eta & \sinh \lambda & 0 \\ 
0 & 0 & 0 & \sinh (\lambda +\eta )%
\end{pmatrix}%
\in \text{End}(\mathbb{C}^{2}\otimes \mathbb{C}^{2}),  \label{R-6V}
\end{equation}
whereas  the scalar boundary matrix \eqref{Km}, which parameterizes the boundary field $h_+$, is defined as
\begin{equation}
K(\lambda ;\varsigma ,\kappa ,\tau )=\frac{1}{\sinh \varsigma }\,%
\begin{pmatrix}
\sinh (\lambda -\eta /2+\varsigma ) & \kappa e^{\tau }\sinh (2\lambda -\eta )
\\ 
\kappa e^{-\tau }\sinh (2\lambda -\eta ) & \sinh (\varsigma -\lambda +\eta
/2)%
\end{pmatrix}%
.  \label{mat-K}
\end{equation}
Introducing also the scalar boundary matrix $K_+(\lambda )=K(\lambda +\eta ;\varsigma _{-},\kappa _{-},\tau _{-})$ parameterizing the boundary field $h_-$, we define the transfer matrice as
\begin{equation}
\mathcal{T}(\lambda )
=\tr_{0}\big[ K_{+,0}(\lambda )\,T_{0}(\lambda)\,K_{-,0}(\lambda )\,\hat{T}_{0}(\lambda )\big]
=\tr_{0}\big[ K_{+,0}(\lambda )\, \mathcal{U}_{-,0}(\lambda ) \big].  \label{transfer}
\end{equation}
The latter form a one-parameter family of commuting operators on $\mathcal{H}$, and the Hamiltonian \eqref{Ham} can be obtained in terms of this transfer matrix in the homogeneous limit $\xi_1=\xi_2=\cdots=\xi_N=0$ as
\begin{equation}
H=\frac{2\,(\sinh \eta )^{1-2N}}{\tr [K_{+}(\eta /2)]\,\tr [ K_{-}(\eta /2)]}\,
\frac{d}{d\lambda }\mathcal{T}(\lambda )_{\,\vrule %
height13ptdepth1pt\>{\lambda =\eta /2}\!}+\text{constant.}  \label{Ht}
\end{equation}
Note that we have here used the same convention as in \cite{NicT22}, in which we have defined $K_-$ in terms of $\varsigma _{+},\kappa _{+},\tau _{+}$ parameterizing $h_+$ and $K_+$ in terms of $\varsigma_-,\kappa_-,\tau_-$ parameterizing $h_-$.

In this framework, the transfer matrix eigenstates can be constructed by algebraic Bethe ansatz (ABA) when both boundary matrices $K_\pm$ are diagonal \cite{Skl88}, or by Separation of Variables (SoV) otherwise  \cite{FraSW08,Nic12,FalKN14,KitMNT18,MaiN19}. In this paper we consider the situation in which the boundary matrices $K_\pm$ are both non-diagonal and in which the boundary parameters satisfy some constraint,
\begin{equation}\label{constraint}
  \cosh (\tau _{+}-\tau _{-}) 
  = \epsilon _{\varphi _{+}}\epsilon _{\varphi _{-}}\cosh (\epsilon _{\varphi
_{+}}\varphi _{+}+\epsilon _{\varphi _{-}}\varphi _{-}+\epsilon _{\psi
_{+}}\psi _{+}-\epsilon _{\psi _{-}}\psi _{-}+(N-1-2M)\eta ),
\end{equation}
for some $M\in\{1,\ldots,N\}$ and some $\boldsymbol{\varepsilon}\equiv (\epsilon _{\varphi
_{+}},\epsilon _{\varphi _{-}},\epsilon _{\psi _{+}},\epsilon _{\psi
_{-}})\in \{-1,1\}^{4}$ such that $\epsilon _{\varphi _{+}}\epsilon
_{\varphi _{-}}\epsilon _{\psi _{+}}\epsilon _{\psi _{-}}=1$. In this case, the spectrum can be partially described by usual Bethe equations \cite{Nep04,CaoLSW03,KitMN14}, with corresponding eigenstates constructed by SoV in terms of some polynomial function $Q(\lambda)$ of the form
\begin{equation}\label{Q-form}
Q(\lambda )=\prod_{j=1}^{M}\frac{\cosh (2\lambda )-\cosh (2\lambda _{j})}{2}%
=\prod_{j=1}^{M}\left( \sinh ^{2}\lambda -\sinh ^{2}\lambda _{j}\right) ,
\end{equation}
where $\lambda_1,\ldots,\lambda_M$ are the corresponding Bethe roots. Except in the case $M=N$, such a description is not complete. However, it was conjectured in \cite{NepR03,NepR04}, using numerical studies, that the solutions of the Bethe equations for some $M$ subject to the constraint \eqref{constraint} with some choice of $\boldsymbol{\varepsilon}$, together with the solutions for $M'=N-1-M$ subject to the constraint with $\boldsymbol{\varepsilon'}=-\boldsymbol{\varepsilon}$, produce the complete spectrum. We shall not discuss the validity of this conjecture in the present paper, but we will nevertheless suppose that we are in a situation in which at least the ground state can be described in this framework, which will be enough for our purpose.

%%%%%%%%%%%%%%%%%%%%%%%%%%%%%%%
\subsection{Transfer matrix spectrum and eigenstates by SoV}

Let us recall that, in the situation we consider here of non-diagonal boundary matrices $K_\pm$, and provided that the inhomogeneity parameters are generic, i.e.
\begin{equation}\label{cond-inh}
\xi _{j},\xi _{j}\pm \xi _{k}\notin \{0,-\eta ,\eta \}\text{ mod}(i\pi
),\quad \forall j,k\in \{1,\ldots ,N\},\ j\neq k,
\end{equation}
the transfer matrix $\mathcal{T}(\lambda )$ has simple spectrum, and that
the complete set of its left
and right eigenstates can be expressed in the form of {\em separate states} as
\begin{align}
 &\ket{Q}=\sum_{\mathbf{h}\in \{0,1\}^{N}}
                 \prod_{n=1}^{N}\!\frac{Q(\xi _{n}^{(h_{n})})}{Q(\xi _{n}^{(0)})}\ 
                 e^{-\sum_{j}h_{j}\xi _{j}}\,\widehat{V}(\xi _{1}^{(h_{1})},\ldots ,\xi _{N}^{(h_{N})})\ 
                 \ket{\mathbf{h} } , \label{eigenR}\\
 &\bra{Q}=\sum_{\mathbf{h}\in \{0,1\}^{N}}
                 \prod_{n=1}^{N}\left[ 
                 \left(\frac{\sinh (2\xi _{n}-2\eta )}{\sinh (2\xi _{n}+2\eta )}\,\frac{\mathbf{A}_{\boldsymbol{\varepsilon}}(\xi _{n}+\frac{\eta }{2})}{\mathbf{A}_{%
\boldsymbol{\varepsilon}}(-\xi _{n}+\frac{\eta }{2})}\right)^{h_n}
                 %(\mathsf{u}_{n}\,\mathsf{v}_{n,\boldsymbol{\varepsilon}})^{h_{n}}\ 
                 \frac{Q(\xi_{n}^{(h_{n})})}{Q(\xi _{n}^{(0)})}\right] 
                 \nonumber\\
                 &\hspace{7cm}\times
                 e^{-\sum_{j}h_{j}\xi _{j}}\,\widehat{V}(\xi _{1}^{(h_{1})},\ldots ,\xi _{N}^{(h_{N})})\ 
                 \bra{\mathbf{h} }.
                 \label{eigenL}
\end{align}
More precisely, the unique (up to normalization) $\mathcal{T}(\lambda )$-eigenstates associated with any given eigenvalue $\tau(\lambda)$ are given by \eqref{eigenR} and \eqref{eigenL} in which $Q$ is such that
\begin{equation}\label{discrete-TQ}
  \frac{Q(\xi_n^{(1)})}{Q(\xi_n^{(0)})}=\frac{\tau(\xi_n^{(0)})}{\mathbf{A}_{\boldsymbol{\varepsilon}}(\xi _{n}^{(0)})}
  =\frac{\mathbf{A}_{\boldsymbol{\varepsilon}}(-\xi _{n}^{(1)})}{\tau(\xi_n^{(1)})},
  \qquad\forall n\in\{1,\ldots,N\}.
\end{equation}
In the above formulas,
\begin{equation}\label{SoV-basis}
   \big\{\ket{\mathbf{h}}, \mathbf{h}\equiv (h_{1},\ldots ,h_{N})\in \{0,1\}^{N} \big\}
%\end{equation}
%
\qquad \text{and} \qquad
%
%\begin{equation}
\big\{ \bra{\mathbf{h} }, \mathbf{h}\equiv (h_{1},\ldots ,h_{N})\in \{0,1\}^{N} \big\}
\end{equation}
are SoV bases of $\mathcal{H}$ and $\mathcal{H^*}$ respectively, which can be constructed either by a generalization of Sklyanin's SoV approach \cite{Nic12,FalKN14} or by the new SoV approach proposed in \cite{MaiN19}. The normalization of such bases can be chosen so that
\begin{equation}
\langle\, \mathbf{h}\,|\,\mathbf{h}^{\prime }\,\rangle 
=\delta _{\mathbf{h},\mathbf{h}^{\prime }}\,
\frac{N(\{\xi \})\, e^{2\sum_{j=1}^{N}h_{j}\xi _{j}}}{\widehat{V}(\xi _{1}^{(h_{1})},\ldots ,\xi _{N}^{(h_{N})})},
\label{Ortho-norm}
\end{equation}
where
\begin{equation}
N(\{\xi \})=\widehat{V}(\xi _{1},\ldots ,\xi _{N})\,\frac{\widehat{V}(\xi
_{1}^{(0)},\ldots ,\xi _{N}^{(0)})}{\widehat{V}(\xi _{1}^{(1)},\ldots ,\xi
_{N}^{(1)})}.
\end{equation}
As in \cite{NicT22}, we have used the notations
\begin{align}\label{def-xi-shift}
   &\xi _{n}^{(h)}=\xi _{n}+\eta /2-h\eta ,\qquad 1\leq n\leq N,\quad h\in\{0,1\},\\
   &\widehat{V}(x_{1},\ldots ,x_{N})=\det_{1\leq i,j\leq N}\left[ \sinh^{2(j-1)}x_{i}\right] =\prod_{j<k}(\sinh ^{2}x_{k}-\sinh ^{2}x_{j}),
\end{align}
Moreover we have defined,
\begin{equation}
\mathbf{A}_{\boldsymbol{\varepsilon}}(\lambda )=(-1)^{N}\,\frac{\sinh
(2\lambda +\eta )}{\sinh (2\lambda )}\,\mathbf{a}_{\boldsymbol{\varepsilon}%
}(\lambda )\,a(\lambda )\,d(-\lambda ),\label{DefFullA}
\end{equation}
where 
\begin{equation}
a(\lambda )=\prod_{n=1}^{N}\sinh (\lambda -\xi _{n}+\eta /2),\qquad
d(\lambda )=\prod_{n=1}^{N}\sinh (\lambda -\xi _{n}-\eta /2),  \label{a-d}
\end{equation}
and
\begin{align}\label{a_eps}
\mathbf{a}_{\boldsymbol{\varepsilon}}(\lambda )&=\frac{\sinh (\lambda -\frac{%
\eta }{2}+\epsilon _{\varphi _{+}}\varphi _{+})\,\cosh (\lambda -\frac{\eta 
}{2}+\epsilon _{\psi _{+}}\psi _{+})}{\sinh (\epsilon _{\varphi _{+}}\varphi
_{+})\,\cosh (\epsilon _{\psi _{+}}\psi _{+})}\notag\\
&\times\frac{\sinh (\lambda -\frac{\eta }{2}+\epsilon _{\varphi _{-}}\varphi _{-})\,\cosh (\lambda -\frac{\eta 
}{2}-\epsilon _{\psi _{-}}\psi _{-})}{\sinh (\epsilon _{\varphi _{-}}\varphi
_{-})\,\cosh (\epsilon _{\psi _{-}}\psi _{-})},
\end{align}
for any choice of $\boldsymbol{\varepsilon}\equiv (\epsilon _{\varphi
_{+}},\epsilon _{\varphi _{-}},\epsilon _{\psi _{+}},\epsilon _{\psi
_{-}})\in \{-1,1\}^{4}$ such that $\epsilon _{\varphi _{+}}\epsilon
_{\varphi _{-}}\epsilon _{\psi _{+}}\epsilon _{\psi _{-}}=1$.

Note that the new SoV construction proposed in \cite{MaiN19} (see Section 3.1.2 of our previous work \cite{NicT22} for a summary of this construction in our present notations) is more general: it is valid as long as the two boundary matrices $K_+$ and $K_-$ are not both proportional to the identity. The generalization of Skyanin's SoV approach proposed in \cite{FalKN14} (see Section 3.1.1 of  \cite{NicT22}) is instead restricted to the case in which at least one boundary matrix (the boundary matrix $K_-$ with our conventions)  is non-diagonal. It involves a generalized gauge transformation of the reflection algebra,
\begin{align}
\mathcal{U}_{-}(\lambda |\alpha ,\beta )& =S_{0}^{-1}(\eta /2-\lambda
|\alpha ,\beta )\ \mathcal{U}_{-}(\lambda )\ S_{0}(\lambda -\eta /2|\alpha
,\beta )  \notag \\
& =%
\begin{pmatrix}
\mathcal{A}_{-}(\lambda |\alpha ,\beta ) & \mathcal{B}_{-}(\lambda |\alpha
,\beta ) \\ 
\mathcal{C}_{-}(\lambda |\alpha ,\beta ) & \mathcal{D}_{-}(\lambda |\alpha
,\beta )%
\end{pmatrix}%
,  \label{gauged-U}
\end{align}
given by a trigonometric version of Baxter's Vertex-IRF transformation,
\begin{equation}
S(\lambda |\alpha ,\beta )=%
\begin{pmatrix}
e^{\lambda -\eta (\beta +\alpha )} & e^{\lambda +\eta (\beta -\alpha )} \\ 
1 & 1%
\end{pmatrix}%
, \label{mat-S}
\end{equation}
%
%Hence, the bases \eqref{SoV-basis} constructed in the framework of the generalized Sklyanin's approach depend in fact of the two parameters $\alpha$ and $\beta$ of the Vertex-IRF matrix \eqref{mat-S}. The latter are chosen as
with a particular choice\footnote{This choice ensures that  the transfer matrix \eqref{transfer} can be written only in terms of the elements $\mathcal{A}_{-}(\lambda |\alpha ,\beta-1 )$ and $\mathcal{D}_{-}(\lambda |\alpha
,\beta+1 )$ of the gauged transformed monodromy matrix \eqref{gauged-U}, see Proposition 2.2 of \cite{NicT22}.} of the two parameters $\alpha$ and $\beta$ as
\begin{align}
& \eta \alpha =-\tau _{-}+\frac{\epsilon _{-}^{\prime }-\epsilon _{-}}{2}%
(\varphi _{-}-\psi _{-})-\frac{\epsilon _{-}+\epsilon _{-}^{\prime }}{4}i\pi +ik\pi \mod 2i\pi ,  \label{Gauge-cond-A} \\
& \eta \beta =\frac{\epsilon _{-}+\epsilon _{-}^{\prime }}{2}(\varphi
_{-}-\psi _{-})+\frac{2+\epsilon _{-}-\epsilon _{-}^{\prime }}{4}i\pi +ik\pi \mod 2i\pi ,  \label{Gauge-cond-B}
\end{align}
for $\epsilon _{-},\epsilon _{-}^{\prime }\in \{1,-1\}$ and $k\in\mathbb{Z}$. 
Since it is not necessary for our purpose and was already recalled in our previous work \cite{NicT22}, we do not recall the details of this  construction here and refer instead the reader to \cite{NicT22}. We also refer to \cite{NicT22} for general properties of the gauged transformed algebra generated by the operator entries of \eqref{gauged-U}.

\subsection{Reformulation of the spectrum and eigenstates}

One of the difficulties of the SoV approach is that it provides a priori a characterisation of the transfer spectrum and eigenstates in terms of discrete equations of the form \eqref{discrete-TQ}. Such a characterisation is not convenient for the consideration of the physical model at the homogeneous limit  $\xi_1=\xi_2=\cdots=\xi_N=0$ and the computation of physical quantities such as the correlation functions. Hence, this characterisation has been reformulated in \cite{KitMN14} in terms of functional $TQ$-equations, leading notably to usual Bethe equations in the cases in which the constraint \eqref{constraint} holds.

Let us first introduce the following notations: 
\begin{multline}\label{f-r}
\mathfrak{f}_{\boldsymbol{\varepsilon}}^{(r)}\equiv
\mathfrak{f}_{\boldsymbol{\varepsilon}}^{(r)}(\tau _{+},\tau _{-},\varphi_{+},\varphi _{-},\psi _{+},\psi _{-})
=\frac{2\kappa _{+}\kappa _{-}}{\sinh \varsigma _{+}\,\sinh \varsigma _{-}}\,
%\mathfrak{g}_{\boldsymbol{\varepsilon}}^{(r)}(\tau _{+},\tau _{-},\varphi _{+},\varphi _{-},\psi _{+},\psi _{-}),
\big[ \cosh (\tau _{+}-\tau _{-})  \\
 -\epsilon _{\varphi _{+}}\epsilon _{\varphi _{-}}\cosh (\epsilon _{\varphi
_{+}}\varphi _{+}+\epsilon _{\varphi _{-}}\varphi _{-}+\epsilon _{\psi
_{+}}\psi _{+}-\epsilon _{\psi _{-}}\psi _{-}+(N-1-2r)\eta ) \big].
\end{multline}
Moreover, we denote with $\Sigma _{Q}^{M}$ the set of $Q(\lambda )$
polynomials in $\cosh (2\lambda )$ of degree $M$ of the form \eqref{Q-form} with 
\begin{equation}
\cosh (2\lambda _{j})\not=\cosh (2\xi _{n}^{(h)}),\quad \forall \,(j,n,h)\in
\{1,\ldots ,M\}\times \{1,\ldots ,N\}\times \{0,1\}.
\end{equation}

\begin{proposition}[\cite{KitMN14,KitMNT18}, see also \cite{NicT22}]
\label{prop-TQ}
Let the two boundary matrices be not both proportional to the identity matrix and the inhomogeneity parameters be generic \eqref{cond-inh}.

\begin{enumerate}
\item\label{case1} Let us suppose that, for a given $\boldsymbol{\varepsilon}\equiv (\epsilon _{\varphi _{+}},\epsilon _{\varphi
_{-}},\epsilon _{\psi _{+}},\epsilon _{\psi _{-}})\in \{-1,1\}^{4}$ such that $\epsilon _{\varphi _{+}}\epsilon _{\varphi _{-}}\epsilon _{\psi
_{+}}\epsilon _{\psi _{-}}=1$,
\begin{equation}\label{cond-inhom}
   \forall r\in\{0,\ldots,N-1\},\qquad \mathfrak{f}_{\boldsymbol{\varepsilon}}^{(r)}(\tau _{+},\tau _{-},\varphi
_{+},\varphi _{-},\psi _{+},\psi _{-})\not=0.
\end{equation}
Then, the transfer matrix $\mathcal{T}(\lambda)$ is diagonalizable
with simple spectrum, and the set $\Sigma _{\mathcal{T}}$ of its eigenvalues  is given by the set of entire functions $\tau(\lambda)$ such that there exists a polynomial $Q(\lambda )\in \Sigma _{Q}^{N}$ satisfying with $\tau(\lambda)$ the following $TQ$-equation with inhomogeneous term:
\begin{multline}
\tau (\lambda )\, Q(\lambda )
=\mathbf{A}_{\boldsymbol{\varepsilon}}(\lambda)\,Q(\lambda -\eta )
+\mathbf{A}_{\boldsymbol{\varepsilon}}(-\lambda)\,Q(\lambda +\eta )\\
+\mathfrak{f}_{\boldsymbol{\varepsilon}}^{(N)}\, a(\lambda)\, a(-\lambda)\, d(\lambda)\, d(-\lambda)\, 
  [\cosh^2(2\lambda)-\cosh^2\eta].
%+\mathbf{F}_{\boldsymbol{\varepsilon}}(\lambda),  
\label{inhom-TQ}
\end{multline}
Moreover, in that case, the
corresponding $Q(\lambda )\in \Sigma _{Q}^{N}$ satisfying \eqref{inhom-TQ}
with $\tau (\lambda )$ is unique, and the unique (up to an overall normalization factor) left
and right $\mathcal{T}(\lambda )$ eigenstates can be expressed as \eqref{eigenR}-\eqref{eigenL}.

\item\label{case2} Let us suppose that 
\begin{equation}
\frac{\kappa _{+}\kappa _{-}}{\sinh \varsigma _{+}\,\sinh \varsigma _{-}}=0.
\label{Fullhomo-2}
\end{equation}
Then, the transfer matrix $\mathcal{T}(\lambda)$ is diagonalizable
with simple spectrum, and the set $\Sigma _{\mathcal{T}}$ of its eigenvalues  is given by the set of entire functions $\tau(\lambda)$ such that there exists a polynomial $Q(\lambda )\in \cup_{n=0}^N\Sigma _{Q}^{n}$ satisfying with $\tau(\lambda)$ the following homogeneous $TQ$-equation: 
\begin{equation}
\tau (\lambda )Q(\lambda )=\mathbf{A}_{\boldsymbol{\varepsilon}}(\lambda
)\,Q(\lambda -\eta )+\mathbf{A}_{\boldsymbol{\varepsilon}}(-\lambda
)\,Q(\lambda +\eta ),  \label{hom-TQ}
\end{equation}
for a given $\boldsymbol{\varepsilon}\equiv (\epsilon _{\varphi _{+}},\epsilon _{\varphi
_{-}},\epsilon _{\psi _{+}},\epsilon _{\psi _{-}})\in \{-1,1\}^{4}$ and $\epsilon _{\varphi _{+}}\epsilon _{\varphi _{-}}\epsilon _{\psi
_{+}}\epsilon _{\psi _{-}}=1$,
Moreover, in that case, the
corresponding $Q(\lambda )\in \cup_{n=0}^N\Sigma _{Q}^{n}$ satisfying \eqref{hom-TQ}
with $\tau (\lambda )$ is unique, and the unique (up to an overall normalization factor) left
and right $\mathcal{T}(\lambda )$ eigenstates can be expressed as \eqref{eigenR}-\eqref{eigenL}.

\item\label{case3} Let us suppose that the condition \eqref{constraint}
is satisfied for a given $M\in \{0,\ldots ,N-1\}$ and a given choice of $%
\boldsymbol{\varepsilon}\equiv (\epsilon _{\varphi _{+}},\epsilon _{\varphi
_{-}},\epsilon _{\psi _{+}},\epsilon _{\psi _{-}})\in \{-1,1\}^{4}$ such
that $\epsilon _{\varphi _{+}}\epsilon _{\varphi _{-}}\epsilon _{\psi
_{+}}\epsilon _{\psi _{-}}=1$.
Then, the transfer matrix is diagonalizable with simple spectrum,
and any entire function $\tau (\lambda )$ such that there exists $Q(\lambda
)\in \Sigma _{Q}^{M}$ satisfying the homogeneous TQ-equation \eqref{hom-TQ}
is an eigenvalue of the transfer matrix $\mathcal{T}(\lambda )$ (we write $%
\tau (\lambda )\in \Sigma _{\mathcal{T}}$). Moreover, in that case, the
corresponding $Q(\lambda )\in \Sigma _{Q}^{M}$ satisfying \eqref{hom-TQ}
with $\tau (\lambda )$ is unique, and the unique (up to an overall normalization factor) left
and right $\mathcal{T}(\lambda )$ eigenstates can be expressed as \eqref{eigenR}-\eqref{eigenL}.
\end{enumerate}
\end{proposition}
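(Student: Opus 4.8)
The plan is to derive the functional $TQ$-equations from the discrete SoV characterization \eqref{discrete-TQ} and conversely, handling the three cases through a single auxiliary function. I would first record the analytic inputs everything rests on. By the QISM structure of $\mathcal{T}(\lambda)$, any eigenvalue $\tau(\lambda)$ is an even, $i\pi$-periodic trigonometric polynomial of fixed degree in $\cosh(2\lambda)$, with state-independent leading asymptotics as $\mathrm{Re}\,\lambda\to\infty$ and state-independent values at the points where $\cosh^2(2\lambda)-\cosh^2\eta$ vanishes (in particular $\tau(\eta/2)=\mathbf{A}_{\boldsymbol{\varepsilon}}(\eta/2)$, consistently with \eqref{Ht}); similarly every $Q\in\Sigma_Q^M$ is, by \eqref{Q-form}, an even $i\pi$-periodic polynomial of degree $M$ in $\cosh(2\lambda)$. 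On the other hand, the definitions \eqref{DefFullA}--\eqref{a-d} give the crucial vanishings $\mathbf{A}_{\boldsymbol{\varepsilon}}(\xi_n^{(1)})=0$ and $\mathbf{A}_{\boldsymbol{\varepsilon}}(-\xi_n^{(0)})=0$, coming respectively from $a(\xi_n^{(1)})=0$ and $d(\xi_n^{(0)})=0$, while the inhomogeneous term of \eqref{inhom-TQ} vanishes at every $\xi_n^{(h)}$ for the same reason. Diagonalizability and simple spectrum are inherited from the SoV setup recalled before the statement.

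For the direction ``functional $\Rightarrow$ eigenvalue'', which is the easy half of cases~\ref{case1} and~\ref{case2} and the whole of case~\ref{case3}, I would evaluate the relevant $TQ$-equation at $\lambda=\xi_n^{(0)}$ and $\lambda=\xi_n^{(1)}$. Using the vanishings above, one of the two $\mathbf{A}$-terms and the inhomogeneous term drop out at each point, and what remains is exactly the discrete system \eqref{discrete-TQ}; by the SoV description $\tau(\lambda)$ is then an eigenvalue with eigenstates \eqref{eigenR}--\eqref{eigenL} (the relevant $Q(\xi_n^{(h)})$ being nonzero by the defining condition of $\Sigma_Q^M$). For case~\ref{case3} this is already the full argument for the claim ``$\tau$ is an eigenvalue''.

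For the converse in cases~\ref{case1} and~\ref{case2} I would start from an SoV eigenvalue $\tau$, reconstruct $Q$ by interpolating the ratios fixed by \eqref{discrete-TQ} (degree $N$ in case~\ref{case1}, the appropriate degree in case~\ref{case2}), and study the defect
\[
 F(\lambda)=\tau(\lambda)\,Q(\lambda)-\mathbf{A}_{\boldsymbol{\varepsilon}}(\lambda)\,Q(\lambda-\eta)-\mathbf{A}_{\boldsymbol{\varepsilon}}(-\lambda)\,Q(\lambda+\eta).
\]
I would check that $F$ is even and $i\pi$-periodic, that it has no poles at $\lambda\in\{0,i\pi/2\}$ (the residues of the two $\mathbf{A}$-terms cancel because $Q$ is even), hence is again a polynomial in $\cosh(2\lambda)$, and that $F(\xi_n^{(0)})=F(\xi_n^{(1)})=0$ for all $n$ by the discrete equations combined with the $\mathbf{A}$-vanishings. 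Using the special values of $\tau$ I would further show $F$ vanishes where $\cosh^2(2\lambda)-\cosh^2\eta$ does. This pins down all but the leading coefficient, which I would compute from the asymptotics: it equals the explicit prefactor of \eqref{inhom-TQ}, so that $F$ coincides with the inhomogeneous term in case~\ref{case1} and, since the prefactor $\kappa_+\kappa_-/(\sinh\varsigma_+\sinh\varsigma_-)$ and hence every $\mathfrak{f}^{(r)}_{\boldsymbol{\varepsilon}}$ vanishes in case~\ref{case2}, forces $F\equiv 0$ there. Uniqueness of $Q$ in all cases follows from the simplicity of the spectrum together with a Wronskian/interpolation argument.

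The main obstacle is precisely this coefficient-matching in the converse direction: it requires sharp control of the degree and of the state-independent leading coefficient and special values of $\tau$, and a careful asymptotic bookkeeping showing that the large-$\lambda$ behaviours of $\tau(\lambda)Q(\lambda)$ and of $\mathbf{A}_{\boldsymbol{\varepsilon}}(\lambda)Q(\lambda-\eta)+\mathbf{A}_{\boldsymbol{\varepsilon}}(-\lambda)Q(\lambda+\eta)$ combine to leave exactly the term proportional to $\mathfrak{f}^{(N)}_{\boldsymbol{\varepsilon}}$ of \eqref{f-r}. The same bookkeeping explains the role of the constraint in case~\ref{case3}: the homogeneous equation \eqref{hom-TQ} can support a $Q$ of degree exactly $M$ only when the corresponding asymptotic coefficient $\mathfrak{f}^{(M)}_{\boldsymbol{\varepsilon}}$ vanishes, which is nothing but \eqref{constraint} for that $M$ and $\boldsymbol{\varepsilon}$; this is also why only part of the spectrum, and only one implication, is obtained when $M<N$. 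For the detailed estimates I would follow \cite{KitMN14,KitMNT18}.
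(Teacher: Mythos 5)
The paper does not actually prove Proposition~\ref{prop-TQ} --- it is imported from \cite{KitMN14,KitMNT18} --- and your sketch reproduces precisely the argument of those references: evaluating the functional equation at the shifted inhomogeneities $\xi_n^{(0)},\xi_n^{(1)}$ to recover the discrete SoV system \eqref{discrete-TQ} in one direction, and in the other interpolating $Q$ from the discrete ratios and analysing the defect $F$ by parity, pole cancellation, its $2N+2$ forced zeros and its leading asymptotics, with the mismatch coefficient $\mathfrak{f}^{(r)}_{\boldsymbol{\varepsilon}}$ accounting both for the inhomogeneous term in case~\ref{case1} and for the role of the constraint \eqref{constraint} in case~\ref{case3}. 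The outline is correct; the only step left implicit is the verification that an entire $\tau$ satisfying \eqref{hom-TQ} or \eqref{inhom-TQ} with $Q\in\Sigma_Q^M$ automatically belongs to the admissible functional class of the SoV characterization (correct degree in $\cosh(2\lambda)$, parity, and central values such as $\tau(\eta/2)=\mathbf{A}_{\boldsymbol{\varepsilon}}(\eta/2)$), which again follows from the equation itself as in the cited works.
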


\begin{rem}
In the formulation of this proposition, the bases \eqref{SoV-basis} used in \eqref{eigenR}-\eqref{eigenL} can be either the SoV bases constructed via the new SoV approach proposed in \cite{MaiN19}, or  the bases constructed via the generalized Sklyanin's approach, provided that the latter approach is applicable (we recall that its range of validity is more restricted).
\end{rem}

In our previous work \cite{NicT22}, we considered the case \ref{case2}, which corresponds to the situation in which (at least) one of the two boundary matrices is diagonal. Here instead we want to consider the case \ref{case3}, in which both boundary matrices are non-diagonal and satisfy the constraint \eqref{constraint}. Note that the characterization of the spectrum and eigenstates provided by this proposition in the case \ref{case3} is not complete, contrary to what happens in cases  \ref{case1} and \ref{case2}. As explained in \cite{KitMN14,KitMNT18}, a part of the spectrum is then given by solutions of the inhomogeneous $TQ$-equation \eqref{inhom-TQ}. It can also be noticed that, if the constraint \eqref{constraint} is satisfied for a given $M$ and a given $\boldsymbol{\varepsilon}$, then it is also satisfied for $M'=N-1-M$ and $\boldsymbol{\varepsilon'}=-\boldsymbol{\varepsilon}$. As mentioned above, it was conjectured in \cite{NepR03,NepR04} that the solutions for $M,\boldsymbol{\varepsilon}$  together with the solutions for $M',\boldsymbol{\varepsilon'}$  produce the complete spectrum.

%\begin{rem}
%The characterization of the spectrum and eigenstates provided by this proposition in cases \ref{case1} and \ref{case2} is complete, but not in case \ref{case3}. In the latter case, as explained in \cite{KitMN14,KitMNT18}, a part of the spectrum is given by solutions of the inhomogeneous $TQ$-equation \eqref{inhom-TQ}.
%\end{rem}

Under such a reformulation of the spectrum, any separate state constructed as in \eqref{eigenR}-\eqref{eigenL} within the generalization of Sklyanin's SoV approach can be expressed as a generalized Bethe state in the framework of the generalized gauge transformed reflection algebra given by \eqref{gauged-U}, see (3.59)-(3.62) of \cite{NicT22}. Moreover, under certain conditions, the corresponding generalized reference state can be explicitly identified.

Let us introduce, as in \cite{NicT22}, the notations
\begin{align}\label{ref-gauge}
  &\ket{\eta ,x} \equiv \otimes _{n=1}^N\begin{pmatrix} e^{-(n-N+x)\eta -\xi _{n}} \\ 1 \end{pmatrix}_{n},
  \\
  &\widehat{\mathcal{B}}_-(\lambda|\alpha-\beta)
  =\sinh (\eta \beta )\,e^{-\eta \beta }\, e^{-(\lambda-\eta/2)}\,
  \mathcal{B}_{-}(\lambda |\alpha,\beta ),
  \label{Bhat}
\end{align}
and
\begin{align}
  \label{product-Bhat}
  \underline{\widehat{\mathcal{B}}}_{-,M}(\{\lambda _{i}\}_{i=1}^{M}|\alpha -\beta+1)
  &= 
  \widehat{\mathcal{B}}_{-}(\lambda _{1}|\alpha -\beta +1)\cdots 
  \widehat{\mathcal{B}}_{-}(\lambda _{M}|\alpha -\beta +2M-1)
  \nonumber\\
  &= \prod_{j=1\to M} \widehat{\mathcal{B}}_{-}(\lambda_j|\alpha -\beta +2j-1).
\end{align}
Then we can state the following proposition, which is just a reformulation of Proposition 3.4 of \cite{NicT22}:

\begin{proposition}
\label{prop-ABA-SoV}
Let us suppose that, for a given  $M\in\{1,\ldots,N\}$ and given $\epsilon_{\varphi_+},\epsilon_{\varphi_-}\in\{+1,-1\}$,
\begin{equation}
\tau _{+}-\tau _{-}=-\epsilon_{\varphi_+}(\varphi _{+}+\psi_{+})-\epsilon_{\varphi_-}(\varphi _{-}-\psi _{-})-(N-1-2M)\eta +\frac{1-\epsilon_{\varphi_+}\epsilon_{\varphi_-}}2 i\pi \mod 2i\pi .  \label{BC1}
\end{equation}
Let us moreover suppose that the generalized Sklyanin's approach is applicable, with $\alpha$ and $\beta$ fixed in terms of the boundary parameters $\varphi_-,\psi_-$ and of  $\epsilon_{\varphi_-}$ as 
\begin{align}
& \eta \alpha =-\tau _{-}+\frac{\epsilon_{\varphi_-}-\epsilon _{-}}{2}(\varphi _{-}-\psi _{-})-\frac{\epsilon _{-}+\epsilon_{\varphi_-}}{4} i\pi +i k\pi\mod 2i\pi ,  \label{Gauge-cond-A'} \\
& \eta \beta =\frac{\epsilon _{-}+\epsilon_{\varphi_-}}{2}(\varphi_{-}-\psi _{-})+\frac{2+\epsilon _{-}-\epsilon_{\varphi_-}}{4}i\pi +ik\pi \mod 2i\pi ,  
\label{Gauge-cond-B'}
\end{align}
for any $\epsilon _{-}\in \{1,-1\}$, $k\in\mathbb{Z}$.
Then, there exists a constant $\mathsf{c}_{M,\text{ref}}^{(R)}$ such that, for any $Q\in\Sigma_Q^M$ with roots $\lambda_1,\ldots,\lambda_M$ labelled as in \eqref{Q-form}, the separate state \eqref{eigenR} constructed within the generalized Sklyanin's SoV approach with $\alpha,\beta$ given by \eqref{Gauge-cond-A'}-\eqref{Gauge-cond-B'} can be written as
\begin{equation}
  \ket{Q} = 
  \mathsf{c}^{(R)}_{M,\text{ref}}\, \mathsf{c}^{(R)}_{Q,\textnormal{ABA}} \ 
   \underline{\widehat{\mathcal{B}}}_{-,M}(\{\lambda_i\}_{i=1}^{M}|\alpha -\beta+1)\,
   \ket{\eta ,\alpha +\beta +N-2M-1},
   \label{separate-ABA-R}
\end{equation}
in which the explicit expression of $\mathsf{c}^{(R)}_{Q,\textnormal{ABA}} $ is given by (3.61) of \cite{NicT22}.
\end{proposition}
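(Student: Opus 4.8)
The plan is to obtain \eqref{separate-ABA-R} as a reformulation of Proposition~3.4 of \cite{NicT22}, by checking that the five-parameter boundary data of the present setting, once constrained by \eqref{BC1}, reproduce exactly the hypotheses under which that proposition establishes the rewriting. The structural observation I would make first is that every object on either side of \eqref{separate-ABA-R} is built solely from the left-boundary reflexion algebra $\mathcal{U}_-$ and from the gauge parameters $\alpha,\beta$ of \eqref{Gauge-cond-A'}--\eqref{Gauge-cond-B'}: the SoV basis $\ket{\mathbf{h}}$ entering \eqref{eigenR}, the operators $\widehat{\mathcal{B}}_-$ of \eqref{Bhat}--\eqref{product-Bhat}, and the reference vector $\ket{\eta,x}$ of \eqref{ref-gauge} never refer to the boundary matrix $K_+$. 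Consequently \eqref{separate-ABA-R} is, as an algebraic identity in $\mathcal{H}$, insensitive to the form of the boundary field at site $N$, and the explicit computation of (3.59)--(3.62) of \cite{NicT22} transfers here verbatim; passing from the three free parameters of \cite{NicT22} to five merely frees the parameter $\tau_+$, which appears in neither side of the identity.

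Next I would verify the compatibility of the two constraints. Taking $\cosh$ of both sides of \eqref{BC1}, using the parity of $\cosh$ together with $\cosh(x-i\pi)=-\cosh x$ to absorb the phase $\tfrac{1-\epsilon_{\varphi_+}\epsilon_{\varphi_-}}{2}i\pi$, one recovers precisely the constraint \eqref{constraint} with the choice $\epsilon_{\psi_+}=\epsilon_{\varphi_+}$ and $\epsilon_{\psi_-}=\epsilon_{\varphi_-}$, for which $\epsilon_{\varphi_+}\epsilon_{\varphi_-}\epsilon_{\psi_+}\epsilon_{\psi_-}=1$ holds automatically. This fixes the vector $\boldsymbol{\varepsilon}$, and hence the function $\mathbf{a}_{\boldsymbol{\varepsilon}}$ of \eqref{a_eps} governing the homogeneous $TQ$-equation \eqref{hom-TQ}, placing us in case~\ref{case3} of Proposition~\ref{prop-TQ}.

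The main technical step, which I expect to be the crux, is the recognition of the reference state. With $\alpha,\beta$ fixed by \eqref{Gauge-cond-A'}--\eqref{Gauge-cond-B'}, one must check that $\ket{\eta,\alpha+\beta+N-2M-1}$ is the correct vacuum of the gauge-transformed reflexion algebra \eqref{gauged-U}, i.e. that it is annihilated by the gauged lower-left entry $\mathcal{C}_-(\lambda|\alpha,\beta)$ and is a common eigenvector of the gauged diagonal entries $\mathcal{A}_-(\lambda|\alpha,\beta-1)$ and $\mathcal{D}_-(\lambda|\alpha,\beta+1)$ with eigenvalues $\mathbf{A}_{\boldsymbol{\varepsilon}}(\lambda)$ and $\mathbf{A}_{\boldsymbol{\varepsilon}}(-\lambda)$ respectively. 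This is the only place where the Vertex-IRF transformation \eqref{mat-S} is used essentially: the gauge choice \eqref{Gauge-cond-A'}--\eqref{Gauge-cond-B'} together with the shift $x=\alpha+\beta+N-2M-1$ is exactly what makes $S(\lambda|\alpha,\beta)$ triangularize the relevant entries on the factorized vectors of \eqref{ref-gauge} and reproduce the factors of $\mathbf{a}_{\boldsymbol{\varepsilon}}$. Since this computation again involves $\mathcal{U}_-$ only, it is identical to the one in \cite{NicT22}; the residual work, and the genuine obstacle, is the phase bookkeeping needed to confirm that a consistent choice of the auxiliary parameters $\epsilon_-\in\{\pm1\}$ and $k\in\mathbb{Z}$ in \eqref{Gauge-cond-A'}--\eqref{Gauge-cond-B'} survives the enlarged constraint \eqref{BC1} selected in the previous step.

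Granting this recognition, I would conclude by invoking the general separate-state/gauged-Bethe-state correspondence of \cite{NicT22}, which expresses \eqref{eigenR} as the product \eqref{product-Bhat} of gauged $\widehat{\mathcal{B}}_-$-operators acting on the identified vacuum, for arbitrary root data $\lambda_1,\ldots,\lambda_M$. Because this correspondence is a purely algebraic identity in the reflexion algebra, it holds for every $Q\in\Sigma_Q^M$ and not only for those $Q$ solving the Bethe equations, so that no appeal to the eigenstate property is needed; the remaining root-independent proportionality factor is absorbed into the constant $\mathsf{c}^{(R)}_{M,\text{ref}}$, yielding \eqref{separate-ABA-R}.
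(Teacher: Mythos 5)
Your overall strategy coincides with the paper's: Proposition~\ref{prop-ABA-SoV} is presented there with no independent proof, only as ``a reformulation of Proposition 3.4 of \cite{NicT22}'', and your plan of checking that the present hypotheses reproduce those of that proposition --- the compatibility of \eqref{BC1} with \eqref{constraint} for $\epsilon_{\psi_\pm}=\epsilon_{\varphi_\pm}$, the identification of the reference state as the crux, and the observation that the identity holds for arbitrary $Q\in\Sigma_Q^M$ without any appeal to the $TQ$-equation --- is the right reduction.

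However, your central structural claim is backwards, and it matters. With the conventions of this paper, $K_-=K(\lambda;\varsigma_+,\kappa_+,\tau_+)$ parameterizes the field $h_+$ at site $N$ (see \eqref{Km} and the remark following \eqref{Ht}), and $\mathcal{U}_-=T\,K_-\,\hat T$ therefore depends on $(\varsigma_+,\kappa_+,\tau_+)$. Since both the SoV basis and the operators $\widehat{\mathcal{B}}_-$ are built from $\mathcal{U}_-$, both sides of \eqref{separate-ABA-R} \emph{do} depend on the boundary field at site $N$, and $\tau_+$ \emph{does} appear in the identity; what is absent is $K_+$, i.e.\ the matrix attached to the field at site $1$ (whose parameters nevertheless enter through the gauge choice \eqref{Gauge-cond-A'}--\eqref{Gauge-cond-B'}). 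Your own ``main technical step'' exposes the inconsistency: the vacuum eigenvalue $\mathbf{A}_{\boldsymbol{\varepsilon}}(\lambda)$ you invoke contains, through $\mathbf{a}_{\boldsymbol{\varepsilon}}$ in \eqref{a_eps}, the factor $\sinh(\lambda-\tfrac{\eta}{2}+\epsilon_{\varphi_+}\varphi_+)\cosh(\lambda-\tfrac{\eta}{2}+\epsilon_{\psi_+}\psi_+)$, which originates precisely from the action of $K_-$ on the local reference vectors. So the computation is not ``identical'' to that of \cite{NicT22} by insensitivity to the site-$N$ field; passing from three to five free parameters frees two parameters of that field (not merely $\tau_+$), and the legitimacy of the reduction rests on Proposition~3.4 of \cite{NicT22} having been established for a general non-diagonal $K_-$ with hypotheses bearing only on the gauge parameters --- not on the identity being independent of $K_-$. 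As written, your argument would not alert you to the need to re-examine the reference-state computation when $K_-$ is non-diagonal, which is exactly the kind of issue the paper must confront (for the boundary-bulk decomposition) in Section~\ref{sec-Bb}.
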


Note that the ABA rewriting of \eqref{separate-ABA-R} is valid for any separate state constructed within the Sklyanin's approach, and not only for eigenstates (i.e. $Q$ in \eqref{separate-ABA-R} does not need to satisfy the $TQ$-equation). Such a rewriting does however not hold in general for separate states constructed via the new SoV approach, except for eigenstates, see \cite{NicT22} for more details.

%%%%%%%%%%%%%%%%%%%%%%%%%%%%%%%%%%%%%%%%%%%%%%%%%%%%%%%%%%%%%%%%%%
\section{Decomposition of boundary states into bulk ones}\label{sec-Bb}

To compute correlation functions, we need to be able to act with the local operators on the transfer matrix eigenstates.  In the bulk case \cite{KitMT99,KitMT00}, this could be done thanks to the solution of the quantum inverse problem \cite{KitMT99,MaiT00,GohK00}, i.e. of the expression of local operators in terms of the generators of the Yang-Baxter algebra. In the absence of a solution of the quantum inverse problem directly in terms of the boundary algebra, we shall use the same strategy as in \cite{KitKMNST07,NicT22}: decompose the boundary eigenstates in terms of bulk states on which we are able to act by using the bulk inverse problem, see \cite{NicT22} for more details on this procedure. 
To this purpose, the representation of eigenstates, and more generally of separate states (in the Sklyanin's framework) in terms of generalized Bethe states of Proposition~\ref{prop-ABA-SoV}, is particularly convenient. 
In this section, we explain how to decompose these generalized Bethe states into generalized (gauged transformed) bulk Bethe states for any boundary matrix $K_-$. 

\subsection{Boundary-bulk decomposition of the monodromy matrix}

The idea is, as in \cite{KitKMNST07} to use the decomposition of the boundary monodromy matrix into the bulk one. However, as in \cite{NicT22}, instead of using directly \eqref{def-U+} giving $\mathcal{U}_-$ in terms of $T$ \eqref{T-mon}, we prefer for technical reasons to use the decomposition
\begin{equation}
\,\mathcal{U}_{-}(\lambda )=\hat{M}(-\lambda )K_{-}(\lambda )\,M(-\lambda ),
\end{equation}
in terms of 
\begin{align}
M(\lambda )
&=\bar R_{0N}(\lambda -\xi _{N}+\eta /2)\dots \bar R_{01}(\lambda-\xi _{1}+\eta /2)=%
\begin{pmatrix}
A(\lambda ) & B(\lambda ) \\ 
C(\lambda ) & D(\lambda )%
\end{pmatrix},  \label{bulk-mon}\\
&=(-1)^{N}\hat{T}(-\lambda ),
\end{align}
and
\begin{align}
\hat{M}(\lambda ) 
     &=(-1)^{N}\,\sigma _{0}^{y}\,M^{t_{0}}(-\lambda )\,\sigma_{0}^{y} 
     =\bar R_{01}(\lambda +\xi _1+\eta /2)\dots \bar R_{0N}(\lambda +\xi_N+\eta /2)\\
     &=(-1)^{N}T(-\lambda ),
\end{align}
in which 
\begin{equation}\label{barR}
\bar R_{12}(\lambda)=%
\begin{pmatrix}
\sinh (\lambda -\eta ) & 0 & 0 & 0 \\ 
0 & \sinh \lambda & -\sinh \eta & 0 \\ 
0 & -\sinh \eta & \sinh \lambda & 0 \\ 
0 & 0 & 0 & \sinh (\lambda -\eta )%
\end{pmatrix}
=-R_{12}(-\lambda),
\end{equation}
is the $R$-matrix which corresponds to a change of parameter $\eta\rightarrow \bar\eta=-\eta$ with respect to \eqref{R-6V}.
Using the Vertex-IRF transformation \eqref{gauged-U}-\eqref{mat-S}, we can then write, for any choices of the gauge parameters $\alpha,\beta,\gamma,\delta,\gamma',\delta'$ such that the corresponding matrices \eqref{mat-S} are invertible,
\begin{equation}\label{bound-bulk-gauge}
\mathcal{U}_{-}(\lambda |\alpha,\beta )
=\hat{M}(-\lambda |(\gamma ,\delta ),(\alpha ,\beta))\,
K_{-}(\lambda |(\gamma ,\delta),(\gamma ^{\prime },\delta ^{\prime }))\,
M(-\lambda |(\gamma^{\prime },\delta ^{\prime }),(\alpha ,\beta )),
\end{equation}
in which we have defined
\begin{align}
M(\lambda |(\alpha ,\beta ),(\gamma ,\delta )) 
&=S^{-1}(-\eta /2-\lambda|\alpha ,\beta )\,M(\lambda )S(-\eta /2-\lambda |\gamma ,\delta ) \nonumber\\
&=
\begin{pmatrix}
A(\lambda |(\alpha ,\beta ),(\gamma ,\delta )) & B(\lambda |(\alpha ,\beta),(\gamma ,\delta )) \\ 
C(\lambda |(\alpha ,\beta ),(\gamma ,\delta )) & D(\lambda |(\alpha ,\beta),(\gamma ,\delta ))
\end{pmatrix} ,
\label{gauge-M}\\
\hat{M}(\lambda |(\alpha ,\beta ),(\gamma ,\delta ))
&=S^{-1}(\lambda +\eta/2|\gamma ,\delta )\,\hat{M}(\lambda )\,S(\lambda +\eta /2|\alpha ,\beta ) \nonumber\\
& =( -1) ^{N}\, \frac{\det S(\lambda +\eta /2|\alpha ,\beta )}{ \det S(\lambda +\eta /2|\gamma ,\delta )}\,\sigma_{0}^{y}\, M^{t_{0}}(-\lambda |(\alpha -1,\beta ),(\gamma -1,\delta ))\, \sigma_{0}^{y},
\label{gauge-Mhat}
\end{align}
and
\begin{equation}\label{gauge-K}
K_{-}(\lambda |(\gamma ,\delta ),(\gamma ^{\prime },\delta ^{\prime}))
=S^{-1}(\eta /2-\lambda |\gamma ,\delta )\,K_{-}(\lambda)\,
S(\lambda -\eta /2|\gamma ^{\prime },\delta ^{\prime }),
\end{equation}
see \cite{NicT22} for more details. As in \cite{NicT22}, we shall also use the notations
\begin{equation}\label{redef-gauge-op}
M(\lambda |(\alpha ,\beta ),(\gamma ,\delta )) 
 =\frac{e^{\eta (\alpha +1/2)}}{2\sinh \eta \beta } 
 \begin{pmatrix} 
A(\lambda |\alpha -\beta ,\gamma +\delta ) & B(\lambda |\alpha -\beta,\gamma -\delta ) \\ 
C(\lambda |\alpha +\beta ,\gamma +\delta ) & D(\lambda |\alpha +\beta,\gamma -\delta ) 
\end{pmatrix}. 
\end{equation}
highlighting the fact that the entries of \eqref{redef-gauge-op} depend in fact only on two combinations $\alpha\pm\beta$ and $\gamma\pm\delta$. In terms of these matrix elements, the relation \eqref{bound-bulk-gauge} can be rewritten as
\begin{multline}
   \mathcal{U}_{-}(\lambda |\alpha ,\beta )
   =
     \frac{\left( -1\right) ^{N}e^{\eta (\gamma ^{\prime }+\alpha )}}{4\sinh \eta \delta ^{\prime }\sinh \eta \beta }     \,
     \begin{pmatrix}
     D(\lambda |\gamma +\delta -1,\alpha -\beta -1) & -B(\lambda |\gamma -\delta-1,\alpha -\beta -1) \\ 
    -C(\lambda |\gamma +\delta -1,\alpha +\beta -1) & A(\lambda |\gamma -\delta-1,\alpha +\beta -1)
     \end{pmatrix}
      \\
      \times 
      K_{-}(\lambda |(\gamma ,\delta ),(\gamma ^{\prime },\delta ^{\prime}))\,
      \begin{pmatrix}
      A(-\lambda |\gamma ^{\prime }-\delta ^{\prime },\alpha +\beta ) 
      & B(-\lambda|\gamma ^{\prime }-\delta ^{\prime },\alpha -\beta ) \\ 
      C(-\lambda |\gamma ^{\prime }+\delta ^{\prime },\alpha +\beta ) 
      & D(-\lambda|\gamma ^{\prime }+\delta ^{\prime },\alpha -\beta )
      \end{pmatrix} ,  
      \label{BB-Gauged-Dec}
\end{multline}
This relation was used in \cite{NicT22} to express the generalized boundary creation operators $\widehat{\mathcal{B}}_{-}(\lambda _{1}|\alpha -\beta)$ into gauged bulk operators, and then the generalized boundary Bethe states into gauged bulk Bethe states. This was done there in a particular case in which the boundary matrix $K_-$, and hence its gauged counterpart \eqref{gauge-K}, was diagonal, so that the boundary-bulk decomposition obtained in \cite{NicT22} was particularly simple: it was just similar to the one used in \cite{KitKMNST07}. Here we explain how to generalize such formulas to the case of any general boundary matrix $K_-$. As shown below, this can be done by choosing adequately the internal gauge parameters $\gamma,\delta,\gamma',\delta'$ in \eqref{bound-bulk-gauge}.

\begin{proposition}\label{prop-Bb}
Let us fix the internal gauge parameters as
\begin{alignat}{2}
&\gamma =\gamma ^{\prime },\qquad & &\eta \gamma =-\tau _{+}+\epsilon _{+}i\pi /2, \label{choice-gamma}\\
&\delta =\delta ^{\prime },\qquad & &\eta \delta =-\epsilon _{+}(\varphi _{+}+\psi_{+})-i\pi /2, \label{choice-delta}
\end{alignat}
with $\epsilon _{+}=\pm 1$. 
Then, for any choice of the external gauge parameter $\alpha -\beta $, the following boundary bulk decomposition holds:
\begin{multline}\label{boundary-bulk-B}
\widehat{\mathcal{B}}_{-}(\lambda |\alpha -\beta )
 =\frac{(-1) ^{N} e^{\eta (\gamma +\alpha -\beta )}}{4\sinh \eta (\delta +1)}\frac{\sinh (2\lambda -\eta )}{\sinh 2\lambda } 
 \\
 \times 
 \big[\text{\textsc{a}}_{\epsilon _{+}}(\lambda )\, B(-\lambda |\gamma-\delta -1,\alpha -\beta -1) \, D(\lambda |\gamma +\delta ,\alpha -\beta ) 
\\
 -\text{\textsc{a}}_{\epsilon _{+}}(-\lambda )\, B(\lambda |\gamma -\delta-1,\alpha -\beta -1)\, D(-\lambda |\gamma +\delta ,\alpha -\beta )\big],
\end{multline}
where 
\begin{equation}\label{Bb-coeff}
\text{\textsc{a}}_{\epsilon _{+}}(\lambda )
=\frac{\sinh (\lambda -\frac{\eta }{2}+\epsilon _{+}\varphi _{+})\,\cosh (\lambda -\frac{\eta }{2}+\epsilon_{+}\psi _{+})}{\sinh (\epsilon _{+}\varphi _{+})\,\cosh (\epsilon _{+}\psi _{+})}.
\end{equation}
\end{proposition}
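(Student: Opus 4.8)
The plan is to obtain \eqref{boundary-bulk-B} by reading off the $(1,2)$-entry of the master boundary-bulk decomposition \eqref{BB-Gauged-Dec}, specialised to the internal gauge choice \eqref{choice-gamma}-\eqref{choice-delta} with the coincidences $\gamma=\gamma'$, $\delta=\delta'$. Indeed, by \eqref{Bhat} the operator $\widehat{\mathcal{B}}_-(\lambda|\alpha-\beta)$ is simply the $(1,2)$-entry of $\mathcal{U}_-(\lambda|\alpha,\beta)$ multiplied by the scalar $\sinh(\eta\beta)\,e^{-\eta\beta}\,e^{-(\lambda-\eta/2)}$. Writing the right-hand side of \eqref{BB-Gauged-Dec} as $(\text{prefactor})\times L\,\mathcal{K}\,R$, with $\mathcal{K}=K_-(\lambda|(\gamma,\delta),(\gamma',\delta'))$ the gauged boundary matrix \eqref{gauge-K} and $L,R$ the outer gauged-bulk matrices, I would first extract $[L\,\mathcal{K}\,R]_{12}=\sum_{a,b}L_{1a}\,\mathcal{K}_{ab}\,R_{b2}$, a sum of four contributions of the schematic forms $DB$, $DD$, $BB$, $BD$ in the gauged bulk operators.

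The crux, and the step I expect to be the main obstacle, is to show that with the prescription \eqref{choice-gamma}-\eqref{choice-delta} the matrix $\mathcal{K}$ becomes \emph{diagonal}. This is a direct but delicate trigonometric computation: one inserts $K_-(\lambda)=K(\lambda;\varsigma_+,\kappa_+,\tau_+)$ from \eqref{mat-K} into \eqref{gauge-K}, uses the explicit $S$ and $S^{-1}$ from \eqref{mat-S}, and checks that both off-diagonal entries vanish. The cancellation cannot come from a $\lambda$-independent conjugation, since $K(\lambda)$ splits over the three ``directions'' $\cosh(\lambda-\tfrac\eta2)$, $\sinh(\lambda-\tfrac\eta2)$ and $\sinh(2\lambda-\eta)$ with incompatible $\lambda$-dependence; it is precisely the fact that the two conjugating frames are evaluated at the opposite shifted arguments $\eta/2-\lambda$ and $\lambda-\eta/2$ (whence the necessity of $\gamma=\gamma'$, $\delta=\delta'$) that makes the vertex-IRF transformation diagonalising. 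The reparametrisation \eqref{reparam-bords} relating $(\varsigma_+,\kappa_+)$ to $(\varphi_+,\psi_+)$ is what turns the vanishing condition into exactly $\eta\gamma=-\tau_++\epsilon_+ i\pi/2$ and $\eta\delta=-\epsilon_+(\varphi_++\psi_+)-i\pi/2$. Along the way the two diagonal entries come out proportional to $\text{\textsc{a}}_{\epsilon_+}(\lambda)$ and $\text{\textsc{a}}_{\epsilon_+}(-\lambda)$, with $\text{\textsc{a}}_{\epsilon_+}$ as in \eqref{Bb-coeff}; this is consistent with $\text{\textsc{a}}_{\epsilon_+}$ being exactly the $+$-boundary factor of $\mathbf{a}_{\boldsymbol{\varepsilon}}$ in \eqref{a_eps}, and it is the origin of the coefficients in \eqref{boundary-bulk-B}. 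Note that the diagonalisation involves only the internal parameters $\gamma,\delta$, so the result holds for arbitrary external $\alpha-\beta$, as claimed.

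Once $\mathcal{K}$ is diagonal, only the terms $L_{11}\mathcal{K}_{11}R_{12}$ and $L_{12}\mathcal{K}_{22}R_{22}$ survive: reading the labels off \eqref{BB-Gauged-Dec} (with $\gamma'=\gamma$, $\delta'=\delta$) these are $D(\lambda|\gamma+\delta-1,\alpha-\beta-1)\,\mathcal{K}_{11}\,B(-\lambda|\gamma-\delta,\alpha-\beta)$ and $-\,B(\lambda|\gamma-\delta-1,\alpha-\beta-1)\,\mathcal{K}_{22}\,D(-\lambda|\gamma+\delta,\alpha-\beta)$, while the $DD$- and $BB$-type terms drop out. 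The second surviving term already has exactly the operator content of the second term of \eqref{boundary-bulk-B}; the first one is a $D(\lambda)\cdots B(-\lambda)$ product and must be reordered into the required $B(-\lambda)\cdots D(\lambda)$ form by the reordering/gauge-shift relations of the gauged bulk algebra recalled in \cite{NicT22}, which simultaneously produce the label shifts $\gamma\pm\delta\to\gamma\pm\delta\mp1$, $\alpha-\beta\to\alpha-\beta-1$ seen in \eqref{boundary-bulk-B}. Finally one collects the scalars: the prefactor of \eqref{BB-Gauged-Dec} combined with the normalisation $\sinh(\eta\beta)\,e^{-\eta\beta}\,e^{-(\lambda-\eta/2)}$ of \eqref{Bhat} (the $\sinh\eta\beta$ cancelling and $e^{\eta(\gamma+\alpha)}e^{-\eta\beta}=e^{\eta(\gamma+\alpha-\beta)}$), together with the explicit diagonal entries of $\mathcal{K}$ and the shift factors, reproduce the prefactor $\tfrac{(-1)^N e^{\eta(\gamma+\alpha-\beta)}}{4\sinh\eta(\delta+1)}\tfrac{\sinh(2\lambda-\eta)}{\sinh 2\lambda}$ and the two terms of \eqref{boundary-bulk-B}. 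This final bookkeeping is routine but must be carried out carefully, as every gauge-label shift and sign has to be tracked.
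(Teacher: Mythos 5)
Your proposal follows essentially the same route as the paper's proof: diagonalise the gauged boundary matrix $K_{-}(\lambda|(\gamma,\delta),(\gamma,\delta))$ via the choice \eqref{choice-gamma}--\eqref{choice-delta} (the paper states the vanishing conditions on the off-diagonal entries explicitly and checks they are met, with diagonal entries $e^{\lambda-\eta/2}\textsc{a}_{\pm\epsilon_+}(\lambda)$), keep the two surviving terms $D(\lambda|\cdot)\,B(-\lambda|\cdot)$ and $B(\lambda|\cdot)\,D(-\lambda|\cdot)$ of the $(1,2)$-entry of \eqref{BB-Gauged-Dec}, and reorder the first by the gauged $D$--$B$ commutation relation, which generates both the label shifts and the extra $B(\lambda)D(-\lambda)$ contribution that recombines into \eqref{boundary-bulk-B}. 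The argument is correct and matches the paper's proof step for step.
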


\begin{proof}
Let us recall that the gauged boundary matrix $K_{-}(\lambda |(\gamma
,\delta ),(\gamma ^{\prime },\delta ^{\prime }))$ can be made diagonal,
identically with respect to $\lambda $, by fixing $\gamma,\gamma'$ and $\delta,\delta'$ such that
\begin{equation}\label{choice-K-diag1}
\gamma =\gamma ^{\prime }, \qquad \delta =\delta ^{\prime },
\end{equation}
and that
\begin{align}
  &\frac{\kappa _{+}}{\sinh \zeta _{+}}\big[ \sinh (\eta (\gamma +\delta)+\tau _{+})+\sinh (\varphi _{+}+\psi _{+})\big]=0, \label{choice-K-diag2a}\\
  &\frac{\kappa _{+}}{\sinh \zeta _{+}}\big[ \sinh (\eta (\gamma -\delta)+\tau _{+})+\sinh (\varphi _{+}+\psi _{+})\big]=0.\label{choice-K-diag2b}
\end{align}
With the choice \eqref{choice-gamma}-\eqref{choice-delta} these conditions are satisfied, and we have
\begin{align}
   &\big[ K_{-}(\lambda |(\gamma ,\delta ),(\gamma ,\delta ))\big]_{11}=e^{\lambda-\eta/2}\, \text{\textsc{a}}_{\epsilon _{+}}(\lambda ),\\
   &\big[ K_{-}(\lambda |(\gamma ,\delta ),(\gamma ,\delta ))\big]_{22}=e^{\lambda-\eta/2}\, \text{\textsc{a}}_{-\epsilon _{+}}(\lambda ),
\end{align}
so that,
\begin{multline}
  \widehat{\mathcal{B}}_{-}(\lambda |\alpha -\beta )
   =(-1)^{N} \, \frac{ e^{-\lambda+\eta/2+ \eta (\gamma +\alpha -\beta )}}{4\sinh \eta \delta }
   \\
   \times \Big\{  [K_{-}(\lambda |(\gamma ,\delta ),(\gamma ,\delta ))]_{11}\, 
   D(\lambda |\gamma +\delta -1,\alpha -\beta -1)\, B(-\lambda |\gamma -\delta ,\alpha -\beta )   
   \\
   - [ K_{-}(\lambda |(\gamma ,\delta ),(\gamma ,\delta ))]_{22}\,
   B(\lambda |\gamma -\delta -1,\alpha -\beta -1)\, D(-\lambda |\gamma+\delta ,\alpha -\beta ) \Big\}.
\end{multline}
By using the commutation relation
\begin{multline}
   D(\lambda |\gamma +\delta -1,\alpha -\beta -1)\,
   B(-\lambda |\gamma -\delta,\alpha -\beta )
   \\
   = \frac{\sinh (\eta \delta)\, \sinh (2\lambda -\eta )}{\sinh (2\lambda)\, \sinh (\eta(\delta +1))}\,
   B(-\lambda |\gamma -\delta -1,\alpha -\beta -1)\,
   D(\lambda |\gamma+\delta ,\alpha -\beta ) 
   \\
  +\frac{\sinh \eta \,\sinh (2\lambda +\delta \eta )}{\sinh (2\lambda)\, \sinh(\eta (\delta +1))}\,
  B(\lambda |\gamma -\delta -1,\alpha -\beta -1)\,
  D(-\lambda|\gamma +\delta ,\alpha -\beta ),
\end{multline}
one gets the result.
\end{proof}

\subsection{Decomposition of gauged boundary states into bulk ones}

From the previous result one gets the following one on gauged boundary states:

\begin{proposition}
\label{prop-new-BB}
Let $\{\lambda_1,\ldots,\lambda_M\}$ be an arbitrary set of spectral parameters  and  $\alpha,\beta$ be arbitrary gauge parameters.
Then, for $\gamma$ and $\delta$ fixed as in \eqref{choice-gamma}-\eqref{choice-delta} for a given $\epsilon _{+}=\pm 1$, we have the following boundary-bulk decompositions for the gauged boundary Bethe state given by the action of \eqref{product-Bhat} on the gauge reference state $\ket{\eta,\gamma+\delta}$ \eqref{ref-gauge}:
\begin{multline}\label{Bb-state}
   \underline{\widehat{\mathcal{B}}}_{-,M}(\{\lambda _{i}\}_{i=1}^{M}|\alpha -\beta+1)\,
   \ket{\eta, \gamma +\delta }
   = h_{M}(\gamma-\delta ,\alpha -\beta,\gamma+\delta )
   %h_{M}(\gamma ,\delta ,\alpha-\beta )
   \sum_{\sigma _{1}=\pm 1,...,\sigma _{M}=\pm 1}\hspace{-2mm}
   H_{\sigma_{1},\ldots ,\sigma _{M}}(\{\lambda _{i}\}_{i=1}^{M}) 
    \\
 \times B(\lambda _{M}^{( \sigma ) }|\gamma -\delta -1,\alpha-\beta )\cdots B(\lambda _{1}^{\left( \sigma \right) }|\gamma -\delta
-M,\alpha -\beta +M-1)\,
\ket{\eta,\gamma +\delta +M} ,
\end{multline}
with
\begin{align}
   &H_{\sigma _{1},\ldots,\sigma _{M}}(\{\lambda _{i}\}_{i=1}^{M})
   \equiv  H_{\sigma _{1},\ldots,\sigma _{M}}(\{\lambda _{i}\}_{i=1}^{M}|\epsilon_+\varphi_+,\epsilon_+\psi_+)
   \nonumber\\
   &\qquad
   =\prod_{n=1}^{M} \left[ \sigma _{n}\, a(-\lambda _{n}^{( \sigma) })\,
     \text{\textsc{a}}_{\epsilon _{+}}(-\lambda _{n}^{(\sigma )})\,
     \frac{\sinh (2\lambda_{n}-\eta )}{\sinh 2\lambda _{n}}\right]
%     \nonumber\\
%   &\hspace{8cm}
%   \times
     \prod_{1\leq a<b\leq M}\frac{\sinh (\lambda_{a}^{( \sigma) }+\lambda _{b}^{( \sigma ) }+\eta )}{\sinh (\lambda _{a}^{( \sigma ) }+\lambda _{b}^{( \sigma) })}, 
     \label{H_sigma}\\
    &h_{M}(\gamma-\delta ,\alpha -\beta,\gamma+\delta ) =( -1) ^{MN}
    e^{M\eta \frac{\gamma-\delta +\alpha -\beta +N}{2}}
    \prod_{j=1}^{M}\frac{\sinh (\eta \frac{\alpha -\beta-\gamma -\delta +N-1+2j}2)}{2\sinh ( \eta (\delta +j))},
    \label{Bb-h}
\end{align}
where we have used the notation $\lambda_n^{(\sigma)}\equiv \sigma_n\lambda_n$ for $n\in\{1,\ldots,M\}$.
\end{proposition}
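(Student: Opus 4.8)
The plan is to iterate the single-operator boundary–bulk decomposition of Proposition~\ref{prop-Bb} and then push all gauged bulk $D$-operators to the right so that they act on the gauge reference state. Concretely, I would substitute \eqref{boundary-bulk-B} for each of the $M$ factors in \eqref{product-Bhat}, the $j$-th factor $\widehat{\mathcal{B}}_{-}(\lambda_j|\alpha-\beta+2j-1)$ producing the two terms $\text{\textsc{a}}_{\epsilon_+}(\lambda_j)\,B(-\lambda_j|\cdots)\,D(\lambda_j|\cdots)$ and $-\text{\textsc{a}}_{\epsilon_+}(-\lambda_j)\,B(\lambda_j|\cdots)\,D(-\lambda_j|\cdots)$, with the internal gauge labels dictated by the shift $2j-1$. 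Expanding the ordered product yields a $2^M$-fold sum that it is natural to label by $\sigma\equiv(\sigma_1,\dots,\sigma_M)\in\{\pm1\}^M$, where $\sigma_j$ records which of the two terms is selected; writing $\lambda_j^{(\sigma)}\equiv\sigma_j\lambda_j$, each term carries the scalar $\text{\textsc{a}}_{\epsilon_+}(-\lambda_j^{(\sigma)})\,\frac{\sinh(2\lambda_j-\eta)}{\sinh 2\lambda_j}$, a sign, and an alternating operator string in which the surviving $B$ has argument $\lambda_j^{(\sigma)}$ while its partner $D$ carries the reflected argument $-\lambda_j^{(\sigma)}$, all acting on $\ket{\eta,\gamma+\delta}$.

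The core of the computation is to reorder each such string into a product of $B$'s times a product of $D$'s. For this I would use (i) the symmetric reordering of the gauged bulk $B$-operators once their gauge labels are suitably shifted, which is needed to bring them into the displayed order $B(\lambda_M^{(\sigma)}|\gamma-\delta-1,\alpha-\beta)\cdots B(\lambda_1^{(\sigma)}|\gamma-\delta-M,\alpha-\beta+M-1)$, and (ii) the generic-spectral-parameter gauged $D$–$B$ commutation relation — the analogue of the one used in the proof of Proposition~\ref{prop-Bb} — which splits into a ``wanted'' term (retaining the arguments of $D$ and $B$) and an ``unwanted'' term (exchanging them). Commuting each $D(-\lambda_a^{(\sigma)})$ rightward past the $B(\lambda_b^{(\sigma)})$ standing to its right, the wanted terms accumulate the factor $\frac{\sinh(\lambda_a^{(\sigma)}+\lambda_b^{(\sigma)}+\eta)}{\sinh(\lambda_a^{(\sigma)}+\lambda_b^{(\sigma)})}$ — the sum structure arising precisely because the $D$-arguments are the reflections of the $B$-arguments — thereby building the pairwise product of \eqref{H_sigma}. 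Once a $D(-\lambda_j^{(\sigma)})$ reaches $\ket{\eta,\gamma+\delta}$ it acts diagonally, producing the factor $a(-\lambda_j^{(\sigma)})$ and shifting the gauge label by one unit, so that after all $M$ actions the reference state becomes $\ket{\eta,\gamma+\delta+M}$. A careful accounting of the gauge-parameter shifts — which is exactly what the tuned input shifts $2j-1$ are designed to make clean — then shows that all the $\lambda$-independent prefactors (the constant in \eqref{boundary-bulk-B} for each $j$, the $D$-eigenvalue normalizations, and the gauge factors released by the commutations) collect into $h_M(\gamma-\delta,\alpha-\beta,\gamma+\delta)$ of \eqref{Bb-h}, while the remaining scalars assemble into $H_{\sigma_1,\ldots,\sigma_M}$.

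I expect the main obstacle to be controlling the ``unwanted'' terms generated by the $D$–$B$ commutations and showing that, after summing over all $\sigma\in\{\pm1\}^M$, they reorganize into exactly the stated symmetric form rather than producing spurious contributions. The cleanest way to handle this is by induction on $M$: assuming \eqref{Bb-state} for $M-1$, one applies one further $\widehat{\mathcal{B}}_{-}(\lambda_1|\alpha-\beta+1)$ on the left via \eqref{boundary-bulk-B}, commutes the two new operators $D(\pm\lambda_1)$ through the $M-1$ bulk $B$'s already present, and checks that the exchange (unwanted) terms are precisely those needed to symmetrize the $\sigma_1=\pm1$ contributions and to upgrade the pairwise product from $M-1$ to $M$ factors; the diagonal action of $D(-\lambda_1^{(\sigma)})$ on $\ket{\eta,\gamma+\delta+M-1}$ then supplies the remaining $a(-\lambda_1^{(\sigma)})$ factor and the final shift to $\ket{\eta,\gamma+\delta+M}$. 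This is essentially the same bookkeeping carried out in the diagonal case of \cite{NicT22}; the only structural change is that the scalar weight is now $\text{\textsc{a}}_{\epsilon_+}$, coming from the diagonalized gauged boundary matrix $K_-(\lambda|(\gamma,\delta),(\gamma,\delta))$, rather than its diagonal counterpart there, so the operator-algebraic part of the argument transfers verbatim while only the explicit coefficients need to be recomputed.
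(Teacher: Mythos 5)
Your overall strategy --- induction on $M$, inserting the single-operator decomposition \eqref{boundary-bulk-B} for the leftmost factor and pushing the resulting gauged bulk $D$-operators to the right --- is the same as the paper's, and your identification of the direct (wanted) action as the source of the factors $a(-\lambda_j^{(\sigma)})$, of the shift of the reference state to $\ket{\eta,\gamma+\delta+M}$, and of the pairwise factors $\sinh(\lambda_a^{(\sigma)}+\lambda_b^{(\sigma)}+\eta)/\sinh(\lambda_a^{(\sigma)}+\lambda_b^{(\sigma)})$ is correct. The gap is in your treatment of the exchange (unwanted) terms. You assert that these terms ``are precisely those needed to symmetrize the $\sigma_1=\pm1$ contributions and to upgrade the pairwise product from $M-1$ to $M$ factors.'' That cannot be the mechanism: the sum over the new sign already comes from the two terms of \eqref{boundary-bulk-B} itself, and the direct-action terms alone already reproduce the full right-hand side of \eqref{Bb-state} at the next rank. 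Moreover, an exchange term leaves behind an operator string containing both $B(\lambda_{M+1})$ and $B(-\lambda_{M+1})$ while one of the $\lambda_a$, $a\le M$, is missing; such a string is simply not of the form appearing in \eqref{Bb-state}, so it cannot be absorbed into the answer --- it must vanish.

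Showing that it does vanish is the crux of the proof, and it is exactly here that the special choice \eqref{choice-delta} of $\delta$ enters. Since the proposition is claimed for arbitrary spectral parameters, no Bethe-equation cancellation is available: one must check that, after summing over the two signs $\sigma_a,\sigma_{M+1}$, the coefficient of each exchanged string is identically zero. The paper computes this coefficient explicitly in \eqref{Null-sum} and finds it proportional to $\cosh(\epsilon_+(\varphi_++\psi_+)+\eta\delta)$, which vanishes precisely for $\eta\delta=-\epsilon_+(\varphi_++\psi_+)-i\pi/2$. Without this identity the boundary--bulk decomposition does not close into the stated form, so your argument is incomplete at its most delicate step; the remaining bookkeeping (gauge-label shifts, collection of the scalar prefactors into $h_M$ and $H_{\sigma_1,\ldots,\sigma_M}$) is as you describe.
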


\begin{proof}
As in the particular case considered in \cite{NicT22}, we proceed by induction on $M$. The result clearly holds for $M=1$ from \eqref{boundary-bulk-B}. Let us now suppose that it holds for a given $M$. Then, we have that by Proposition~\ref{prop-Bb} that
\begin{multline}
    \widehat{\mathcal{B}}_{-}(\lambda _{M+1}|\alpha -\beta +1)\,
    \underline{\widehat{\mathcal{B}}}_{-,M}(\{\lambda _{i}\}_{i=1}^{M}|\alpha -\beta+3)\,  \ket{\eta, \gamma +\delta }
    \\
    =\frac{(-1) ^{N} e^{\eta (\gamma +\alpha -\beta+1 )}}{4\sinh \eta (\delta +1)}\frac{\sinh (2\lambda_{M+1} -\eta )}{\sinh 2\lambda_{M+1} } \, h_{M}(\gamma-\delta ,\alpha -\beta+2,\gamma+\delta )
   %h_{M}(\gamma ,\delta ,\alpha-\beta )
   \sum_{\sigma _{1}=\pm 1,...,\sigma _{M}=\pm 1}\hspace{-2mm}
   H_{\sigma_{1},\ldots ,\sigma _{M}}(\{\lambda _{i}\}_{i=1}^{M}) 
 \\
 \times 
 \sum_{\sigma_{M+1}=\pm} -\sigma_{M+1}\, \text{\textsc{a}}_{\epsilon _{+}}(-\lambda_{M+1}^{(\sigma)} )\, B(\lambda_{M+1}^{(\sigma)} |\gamma-\delta -1,\alpha -\beta) \, D(-\lambda_{M+1}^{(\sigma)} |\gamma +\delta ,\alpha -\beta+1 ) \\
% \big[\text{\textsc{a}}_{\epsilon _{+}}(\lambda_{M+1} )\, B(-\lambda_{M+1} |\gamma-\delta -1,\alpha -\beta) \, D(\lambda_{M+1} |\gamma +\delta ,\alpha -\beta+1 ) \\
% -\text{\textsc{a}}_{\epsilon _{+}}(-\lambda_{M+1} )\, B(\lambda_{M+1} |\gamma -\delta-1,\alpha -\beta )\, D(-\lambda_{M+1} |\gamma +\delta ,\alpha -\beta+1 )\big]\\
 \times B(\lambda _{M}^{( \sigma ) }|\gamma -\delta -1,\alpha-\beta+2 )\cdots B(\lambda _{1}^{( \sigma ) }|\gamma -\delta-M,\alpha -\beta +M+1)\,
\ket{\eta,\gamma +\delta +M}.
\end{multline}
Now it is easy to show that the direct action of $D(-\lambda _{M+1}^{(\sigma) }|\gamma +\delta ,\alpha -\beta -1)$ produces a terms which satisfies the induction:
\begin{multline}
  D(-\lambda _{M+1}^{(\sigma) }|\gamma +\delta ,\alpha -\beta -1)\, B(\lambda _{M}^{( \sigma ) }|\gamma -\delta -1,\alpha-\beta+2 )
  \cdots B(\lambda _{1}^{( \sigma) }|\gamma -\delta-M,\alpha -\beta +M+1)\,
  \ket{\eta,\gamma +\delta +M}
  \\
  \overset{\text{Direct Action}}{=}
  \frac{e^{-(\alpha -\beta +M-1)\eta }-e^{-(\gamma +\delta +M-N)\eta }}{e^{\eta /2}}\, 
  \frac{a(-\lambda _{M+1}^{(\sigma)})\,\sinh \eta (\delta +1)}{\sinh\eta (\delta +M+1)}
  \prod_{a=1}^{M-1}\frac{\sinh (\lambda _{M}^{( \sigma) }+\lambda _{a}^{( \sigma) }+\eta )}{\sinh (\lambda_{M}^{( \sigma) }+\lambda _{a}^{( \sigma) })}
\\
\times 
B(\lambda _{M}^{( \sigma ) }|\gamma -\delta-2 ,\alpha-\beta+1 )
  \cdots B(\lambda _{1}^{( \sigma) }|\gamma -\delta-M-1,\alpha -\beta +M)\,
  \ket{\eta,\gamma +\delta +M+1},
\end{multline}
which produces the left hand side of \eqref{Bb-state} for $M+1$.

It therefore remains to show that  the indirect action of $D(-\lambda _{M+1}^{(\sigma) }|\gamma +\delta ,\alpha -\beta -1)$ produces terms which sum to zero to complete the proof of the induction. 
Let us consider the indirect action which
results in the exchange of $\pm \lambda _a$, for $a\le M$, with $\pm \lambda _{M+1}$
in the monomial of gauged bulk $B$-operators, i.e. into the following vector:
\begin{multline}
  B(\lambda _{M+1}|\gamma -\delta -1,\alpha -\beta )\,B(-\lambda _{M+1}|\gamma-\delta -2,\alpha -\beta +1)
  \\
  \times
  \prod_{j=1\to M-1} B(\mu_{M-j}|\gamma -\delta -2-j,\alpha -\beta+1+j)
  \ket{\eta,\gamma+\delta+M+1} ,
\end{multline}
in which $\{\mu_1,\ldots,\mu_{M-1}\}\equiv\{\lambda_1.\ldots,\lambda_M\}\setminus\{\lambda_a\}$.
Then the coefficient in front of this vector contains the
following factor:
\begin{multline}
  \sum_{\sigma _{a}=\pm 1,\sigma _{M=1}=\pm 1} \sigma_{M+1}\,\sigma_a\,
  \text{\textsc{a}}_{\epsilon_{+}}(-\lambda _{M+1}^{( \sigma) })\,
  \text{\textsc{a}}_{\epsilon_{+}}(-\lambda _{a}^{( \sigma ) })\,
  \frac{\sinh (\lambda _{M+1}^{( \sigma) }+\lambda _{a}^{(\sigma) }-\eta (\delta +1))}{\sinh (\lambda _{M+1}^{( \sigma) }+\lambda _{a}^{( \sigma ) })} 
\\
=-\sinh (2\lambda _{M+1})\,\sinh (2\lambda _{a})\,
\frac{\cosh (\epsilon _{+}(\varphi_{+}+\psi _{+})+\eta \delta )\,\cosh (\epsilon _{+}(\varphi _{+}+\psi_{+})-\eta )}{\sinh^2(\epsilon_+\varphi_+)\,\cosh^2(\epsilon_+\psi_+)}  \label{Null-sum},
\end{multline}
which is zero, being
\begin{equation}
\cosh (\epsilon _{+}(\varphi _{+}+\psi _{+})+\eta \delta )=0
\end{equation}
for $\delta$ given by \eqref{choice-delta}.
\end{proof}

%%%%%%%%%%%%%%%%%%%%%%%%%%%%%%%
\section{Action of local operators on boundary separate states}
\label{sec-act}

Using the boundary-bulk decomposition of generalized Bethe states given in
Proposition~\ref{prop-new-BB}, we can now compute the action of local
operators on these states similarly as in \cite{NicT22}. This action is
however quite cumbersome due to the fact that we have to deal with
generalized gauged transformed bulk and boundary operators. To overcome this
problem, we identified in \cite{NicT22} a particular basis in the space of
local operators on the first $m$ sites of the chain, for which the action on
generalized Bethe states happens to have a relatively simple combinatorial
form\footnote{%
This action is in fact similar to the action of the natural basis of local operators on the usual Bethe states that was obtained in the diagonal case \cite{KitKMNST07}. %(in the case of a diagonal boundary matrix $K_{-}$)
}. We recall here the form of this basis and compute the
action of its elements on states of the form \eqref{Bb-state}, hence
generalizing the result of \cite{NicT22} to the case of a general boundary
matrix $K_{-}$.

Let $E^{i,j}$, $i,j\in\{1,2\}$, denote the usual  elementary matrices on $\mathbb{C}^2$, i.e. the $2\times 2$ matrices with elements $(E^{i,j})_{k,\ell}=\delta_{i,k}\delta_{j,\ell}$. 
As in \cite{NicT22}, we consider the following local operators at site $n$:
\begin{equation}
E_{n}^{\epsilon _{n}^{\prime },\epsilon_{n}}( u |(a,b),(\bar a,\bar b))=S_{n}(-u|\bar a,\bar b)\, E_{n}^{\epsilon _{n}^{\prime },\epsilon
_{n}}\ \left[ S_{n}(-u|a,b)\right] ^{-1}\in \text{End}\mathcal{H}_{n}\ ,
\label{Local-op}
\end{equation}
with $\epsilon _{n}^{\prime },\epsilon _{n}\in \{1,2\}$. For given arbitrary values of the parameters $(u,a,b,\bar a,\bar b)$, the operators \eqref{Local-op} are therefore four different linear combinations of the local elementary operators $E_n^{i,j}\in\End\mathcal{H}_n$, $1\le i,j\le 2$, with coefficients written in terms of $(u,a,b,\bar a,\bar b)$\footnote{Explicitly, we have
\begin{align*}
    &E_n^{1,1}(u|(a,b),(\bar a,\bar b))=\frac{e^{\eta a}}{2\sinh(\eta b)}\big[ -e^{-\eta(\bar a+\bar b)}\, E_n^{1,1}+e^{-u-\eta(a+\bar a-b+\bar b)}\, E_n^{1,2}-e^u\, E_n^{2,1}+e^{-\eta(a-b)}\, E_n^{2,2}\big],\\
    &E_n^{1,2}(u|(a,b),(\bar a,\bar b))=\frac{e^{\eta a}}{2\sinh(\eta b)}\big[ e^{-\eta(\bar a+\bar b)}\, E_n^{1,1}-e^{-u-\eta(a+\bar a+b+\bar b)}\, E_n^{1,2}+e^u\, E_n^{2,1}-e^{-\eta(a+b)}\, E_n^{2,2}\big],\\
    &E_n^{2,1}(u|(a,b),(\bar a,\bar b))=\frac{e^{\eta a}}{2\sinh(\eta b)}\big[ -e^{-\eta(\bar a-\bar b)}\, E_n^{1,1}+e^{-u-\eta(a+\bar a-b-\bar b)}\, E_n^{1,2}-e^u\, E_n^{2,1}+e^{-\eta(a-b)}\, E_n^{2,2}\big],\\
    &E_n^{2,2}(u|(a,b),(\bar a,\bar b))=\frac{e^{\eta a}}{2\sinh(\eta b)}\big[ e^{-\eta(\bar a-\bar b)}\, E_n^{1,1}-e^{-u-\eta(a+\bar a+b-\bar b)}\, E_n^{1,2}+e^u\, E_n^{2,1}-e^{-\eta(a+b)}\, E_n^{2,2}\big].
\end{align*}
}.
From \eqref{Local-op}, we define the following set of tensor products of such local operators on the first $m$ sites of the chain:
\begin{equation}
\mathbb{E}_{m}(\alpha ,\beta )=\left\{ \prod_{n=1}^{m}E_{n}^{\epsilon
_{n}^{\prime },\epsilon _{n}}(\xi _{n}|(a_{n},b_{n}),(\bar{a}_{n},\bar{b}%
_{n}))\in \text{End}(\otimes _{n=1}^{m}\mathcal{H}_{n})\ \mid \ %
\boldsymbol{\epsilon},\boldsymbol{\epsilon'}\in \{1,2\}^{m}\right\} ,
\label{Local-Basis}
\end{equation}
in which the parameters $a_{n},\bar{a}_{n},b_{n},\bar{b}_{n}$, $1\leq n\leq m$, are fixed in terms of the gauged parameters $\alpha,\beta$ and of the $m$-tuples $\boldsymbol{\epsilon}\equiv
(\epsilon _{1},\ldots ,\epsilon _{m})$ and $\boldsymbol{\epsilon'}\equiv
(\epsilon _{1}^{\prime },\ldots ,\epsilon _{m}^{\prime })$  as\footnote{%
For instance, for $m=1$, the $b$ and $\bar b$-parameters are fixed as $b_{1}=\bar{b}_{1}=\beta
-(-1)^{\epsilon _{1}}$; 
for $m=2$, they are fixed as $b_{1}=\beta-(-1)^{\epsilon _{1}}$,  $\bar{b}_{2}=b_{2}=b_{1}-(-1)^{\epsilon _{2}}=\beta-(-1)^{\epsilon_1}-(-1)^{\epsilon_2}$, and $\bar{b}_{1}=b_{2}+(-1)^{\epsilon _{2}^{\prime }}$.} 
\begin{alignat}{2}
& a_{n}=\alpha +1,\qquad  & & b_{n}=\beta -\sum_{r=1}^{n}(-1)^{\epsilon
_{r}},  \label{Gauge.Basis-1} \displaybreak[0]\\
& \bar{a}_{n}=\alpha -1,\qquad  & & \bar{b}_{n}=\beta
+\sum_{r=n+1}^{m}(-1)^{\epsilon _{r}^{\prime }}-\sum_{r=1}^{m}(-1)^{\epsilon
_{r}}=b_{n}+2\tilde{m}_{n+1},  \label{Gauge.Basis-2}
\end{alignat}
with 
\begin{equation}
\tilde{m}_{n}=\sum_{r=n}^{m}(\epsilon _{r}^{\prime }-\epsilon
_{r})=\sum_{r=n}^{m}\frac{(-1)^{\epsilon _{r}^{\prime }}-(-1)^{\epsilon _{r}}%
}{2}.
\end{equation}
%
%Hence, any element of $\mathbb{E}_{m}(\alpha ,\beta )$ is a particular tensor product of local operators on the form \eqref{Local-op} on the first $m$ sites of the chain. 
Then, we have shown in \cite{NicT22} that, except for a finite numbers of
values of $\beta \!\!\mod2\pi /\eta $, the set $\mathbb{E}_{m}(\alpha ,\beta
)$ defines a basis of $\text{End}(\otimes _{n=1}^{m}\mathcal{H}_{n})$.
In other words, it means that any operator $\mathcal{O}_{1\to m}$ which acts only on the first $m$ sites of the chain can be expressed as a linear combination of elements of $\mathbb{E}_{m}(\alpha ,\beta)$.

The boundary bulk decomposition of boundary states given in Proposition~\ref{prop-new-BB}, together with the action of the elements of this basis on gauged bulk Bethe states obtained in Theorem~4.1 of \cite{NicT22}, enables us to compute the action of these elements on gauged boundary Bethe states of the form
\begin{equation}\label{BoundarySS}
   \underline{\widehat{\mathcal{B}}}_{-,M}(\{\lambda _{i}\}_{i=1}^{M}|\alpha -\beta+1)\,
   \ket{\eta, \alpha+\beta+N-2M-1 }
\end{equation}
under the following constraint on the choice on the gauge parameters $\alpha$ and $\beta$:
\begin{equation}
  \eta(\alpha+\beta+N-2M-1) = -\tau _{+}-\epsilon_{+}(\varphi _{+}+\psi _{+})+\frac{\epsilon _{+}-1}2 i\pi \mod 2i\pi ,  \label{Cond-BB}
\end{equation}
with $\epsilon_+=\pm 1$.
More precisely, we can formulate the following result, which is the analog of Theorem~4.2 of \cite{NicT22} for this case with an arbitrary boundary matrix $K_-$, provided that we choose $\alpha$ and $\beta$ satisfying \eqref{Cond-BB}\footnote{Note that the formulation of this result does not impose any condition on $\alpha-\beta$.}:

\begin{theorem}\label{th-act-boundary}
Under any choice of $\alpha,\beta$ satisfying \eqref{Cond-BB}, the action on the state \eqref{BoundarySS} of a generic element $\prod_{n=1}^{m}E_{n}^{\epsilon _{n}^{\prime },\epsilon _{n}}(\xi
_{n}|(a_{n},b_{n}),(\bar{a}_{n},\bar{b}_{n}))$ of the basis \eqref{Local-Basis} of local operators on the first $m$
sites of the chain, where the parameters $a_n,b_n,\bar a_n,\bar b_n$ are defined by \eqref{Gauge.Basis-1} and \eqref{Gauge.Basis-2}, is given as
\begin{multline}\label{act-boundary}
   \prod_{n=1}^m E_n^{\epsilon'_n,\epsilon_n}(\xi_n|(a_n,b_n),(\bar{a}_n,\bar{b}_n))\,
   \underline{\widehat{\mathcal B}}_{-,M}(\{\mu_i\}_{i=1}^M|\alpha-\beta+1)\, 
   \ket{\eta,\alpha+\beta+N-1-2M}
   \\
   = \sum_{\mathsf{B}_{\boldsymbol{\epsilon,\epsilon'}}} 
    \mathcal{\bar F}_{\mathsf{B}_{\boldsymbol{\epsilon,\epsilon'}}}(\{\mu_j\}_{j=1}^M,\{\xi_j^{(1)}\}_{j=1}^m | \beta)\
    \\
    \times
    \underline{\widehat{\mathcal B}}_{-,M+\tilde m_{\boldsymbol{\epsilon,\epsilon'}}}(\{\mu_i\}_{\substack{i=1\\ i\notin\mathsf{B}_{\boldsymbol{\epsilon,\epsilon'}}}}^{M+m}|\alpha-\beta+1-2\tilde m_{\boldsymbol{\epsilon,\epsilon'}})\, 
   \ket{\eta,\alpha+\beta+N-1-2M}.
%    \prod_{j:1\longrightarrow |\gamma _{+}|}\widehat{\mathcal{B}}_{-}(\xi _{\gamma _{+}(j)}^{\left( 1\right) }|\alpha -\beta-1+2(j+n_{C}-n_{B}))   \\
%   \times \prod_{j:1\longrightarrow M+n_{B}-(n_{C}+|\gamma_{+}|)}\widehat{\mathcal{B}}_{-}(\mu _{\alpha _{-}(j)}|\alpha -\beta-1+2(|\gamma _{+}|+j+n_{C}-n_{B}))|\eta ,\alpha +\beta +N-1-2M\rangle .
\end{multline}
Here, we have defined $\mu _{M+j}:=\xi _{m+1-j}^{( 1) }$ for $%
j\in \{1,\ldots ,m\}$.
The sum in \eqref{act-boundary} runs over all possible sets of integers $%
\mathsf{B}_{\boldsymbol{\epsilon,\epsilon'}}=\{\text{\textsc{b}}_{1},\ldots ,%
\text{\textsc{b}}_{s+s^{\prime }}\}$ whose elements satisfy the conditions 
\begin{equation}
\begin{cases}
\text{\textsc{b}}_{p}\in \{1,\ldots ,M\}\setminus \{\text{\textsc{b}}%
_{1},\ldots ,\text{\textsc{b}}_{p-1}\}\qquad & \text{for}\quad 0<p\leq s, \\ 
\text{\textsc{b}}_{p}\in \{1,\ldots ,M+m+1-i_{p}\}\setminus \{\text{\textsc{b%
}}_{1},\ldots ,\text{\textsc{b}}_{p-1}\}\qquad & \text{for}\quad s<p\leq
s+s^{\prime }.%
\end{cases}%
\end{equation}
with the notations
\begin{alignat}{2}
   &\{ i_p\}_{p\in\{1,\ldots,s\}}=\{1,\ldots,m\}\cap \{j\mid\epsilon_j=2\}\qquad& &\text{with}\quad i_p<i_q\ \ \text{if}\ \ p<q,\label{i_p-s}\\
   &\{ i_p\}_{p\in\{s+1,\ldots,s+s'\}}=\{1,\ldots,m\}\cap \{j\mid\epsilon'_j=1\}\qquad& &\text{with}\quad i_p>i_q\ \ \text{if}\ \ p<q.\label{i_p-s'}
\end{alignat}
Moreover,
\begin{align}
& \mathcal{\bar{F}}_{\mathsf{B}_{\boldsymbol{\epsilon,\epsilon'}}}(\{\mu
_{j}\}_{j=1}^{M},\{\xi _{j}^{(1)}\}_{j=1}^{m}|\beta )=(-1)^{(N+1)\tilde{m}_{%
\boldsymbol{\epsilon,\epsilon'}}}e^{\eta \tilde{m}_{\boldsymbol{\epsilon,%
\epsilon'}}(\beta +\tilde{m}_{\boldsymbol{\epsilon,\epsilon'}%
})}\prod_{n=1}^{m}\frac{e^{\eta }}{\sinh (\eta b_{n})}\,\sum_{\sigma
_{\alpha _{+}}=\pm }\frac{\prod_{j=1}^{s+s^{\prime }}d(\mu _{\text{\textsc{b}%
}_{j}}^{\sigma })}{\prod_{j=1}^{m}d(\xi _{j}^{(1)})}\   \notag
\label{act-bound-coeff} \\
& \qquad \times \frac{H_{\sigma _{\alpha _{+}}}(\{\mu _{\alpha _{+}}\})}{%
H_{1}(\{\xi _{\gamma _{+}}^{(1)}\})}\,\prod_{i\in \alpha
_{-}}\prod_{\epsilon =\pm }\left\{ \prod_{j\in \alpha _{+}}\frac{\sinh (\mu
_{j}^{\sigma }+\epsilon \mu _{i}+\eta )}{\sinh (\mu _{j}^{\sigma }+\epsilon
\mu _{i})}\prod_{j\in \gamma _{+}}\frac{\sinh (\xi _{j}^{(1)}+\epsilon \mu
_{i})}{\sinh (\xi _{j}^{(0)}+\epsilon \mu _{i})}\right\}  \notag \\
& \qquad \times \prod_{i\in \alpha _{+}}\left\{ \prod_{j\in \gamma _{+}}%
\frac{\sinh (\xi _{j}^{(1)}-\mu _{i}^{\sigma })}{\sinh (\xi _{j}^{(0)}-\mu
_{i}^{\sigma })}\ \frac{\prod_{j\in \alpha _{+}}\sinh (\mu _{j}^{\sigma
}-\mu _{i}^{\sigma }-\eta )}{\prod_{j\in \alpha _{+}\setminus \{i\}}\sinh
(\mu _{j}^{\sigma }-\mu _{i}^{\sigma })}\right\} \prod_{1\leq i<j\leq
s+s^{\prime }}\frac{\sinh (\mu _{\text{\textsc{b}}_{i}}^{\sigma }-\mu _{%
\text{\textsc{b}}_{j}}^{\sigma })}{\sinh (\mu _{\text{\textsc{b}}%
_{i}}^{\sigma }-\mu _{\text{\textsc{b}}_{j}}^{\sigma }-\eta )}  \notag \\
& \qquad \times \prod_{p=1}^{s}\left[ \sinh (\xi _{i_{p}}^{(1)}-\mu _{\text{%
\textsc{b}}_{p}}^{\sigma }+\eta (1+b_{i_{p}}))\,\frac{\prod_{k=i_{p}+1}^{m}%
\sinh (\mu _{\text{\textsc{b}}_{p}}^{\sigma }-\xi _{k}^{(1)}-\eta )}{%
\prod_{k=i_{p}}^{m}\sinh (\mu _{\text{\textsc{b}}_{p}}^{\sigma }-\xi
_{k}^{(1)})}\right]  \notag \\
& \qquad \times \prod_{p=s+1}^{s+s^{\prime }}\left[ \sinh (\xi
_{i_{p}}^{(1)}-\mu _{\text{\textsc{b}}_{p}}^{\sigma }-\eta (1-\bar{b}%
_{i_{p}}))\,\frac{\prod_{k=i_{p}+1}^{m}\sinh (\xi _{k}^{(1)}-\mu _{\text{%
\textsc{b}}_{p}}^{\sigma }-\eta )}{\prod_{\substack{ k=i_{p}  \\ k\not=M+m+1-%
{\text{\textsc{b}}_{p}}}}^{m}\sinh (\xi _{k}^{(1)}-\mu _{\text{\textsc{b}}%
_{p}}^{\sigma })}\right] ,
\end{align}
where the sum is performed over all $\sigma _{j}\in \{+,-\}$ for $j\in
\alpha _{+}$, we have defined $\mu _{i}^{\sigma }=\sigma _{i}\mu _{i}$ for $%
i\in \mathsf{B}_{\boldsymbol{\epsilon,\epsilon'}}$, with $\sigma _{i}=1$ if $%
i>M$, and 
\begin{alignat}{2}
& \alpha _{+}=\mathsf{B}_{\boldsymbol{\epsilon,\epsilon'}}\cap \{1,\ldots
,M\}, & & \alpha _{-}=\{1,\ldots ,M\}\setminus \alpha _{+}, \\
& \gamma _{-}=\{M+m+1-j\}_{j\in \mathsf{B}_{\boldsymbol{\epsilon,\epsilon'}%
}\cap \{N+1,\ldots ,N+m\}},\quad & & \gamma _{+}=\{1,\ldots ,m\}\setminus
\gamma _{-}.
\end{alignat}
The function $H_{\sigma }(\{\lambda \})\equiv  H_\sigma(\{\lambda \}|\epsilon_+\varphi_+,\epsilon_+\psi_+)$ is given by \eqref{H_sigma}.
\end{theorem}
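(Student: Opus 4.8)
The plan is to reduce the computation to the bulk case by the three–step scheme \emph{decompose — act — refold}, following closely the proof of Theorem~4.2 of \cite{NicT22}; the only genuinely new input is that the boundary–bulk decomposition must now be taken in the general, non–diagonal $K_-$ form of Proposition~\ref{prop-new-BB} rather than in the diagonal form used there. The key preliminary observation is that the constraint \eqref{Cond-BB} on $\alpha,\beta$ is precisely what makes the reference state of \eqref{BoundarySS} coincide with the gauge reference state $\ket{\eta,\gamma+\delta}$ attached to the internal parameters $\gamma,\delta$ fixed in \eqref{choice-gamma}--\eqref{choice-delta}: indeed $\eta(\gamma+\delta)=-\tau_+-\epsilon_+(\varphi_++\psi_+)+\frac{\epsilon_+-1}{2}i\pi$ does not depend on $\alpha,\beta$ or $M$, so that $\ket{\eta,\alpha+\beta+N-1-2M}=\ket{\eta,\gamma+\delta}$ modulo $2i\pi/\eta$. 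This is exactly what allows Proposition~\ref{prop-new-BB} to be applied, and then applied a second time after the action with the \emph{same} $\gamma,\delta$.

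First I would use Proposition~\ref{prop-new-BB} to rewrite the boundary separate state \eqref{BoundarySS} as the sum \eqref{Bb-state} over sign configurations $\sigma\in\{\pm1\}^M$ of gauged \emph{bulk} $B$-operator monomials acting on $\ket{\eta,\gamma+\delta+M}$, weighted by the factors $H_{\sigma_1,\ldots,\sigma_M}$ of \eqref{H_sigma}. I would then act, configuration by configuration, with the basis element $\prod_{n=1}^m E_n^{\epsilon'_n,\epsilon_n}(\xi_n|(a_n,b_n),(\bar a_n,\bar b_n))$ using the bulk action formula of Theorem~4.1 of \cite{NicT22}. That formula expresses each term as a sum over subsets selecting which bulk rapidities get \emph{replaced} by the inhomogeneities $\xi_j^{(1)}$ coming from the bulk inverse problem; identifying these insertions with $\mu_{M+j}:=\xi_{m+1-j}^{(1)}$ produces longer bulk monomials in the enlarged variable set, acting on a reference state whose label has shifted by $+\tilde m_{\boldsymbol{\epsilon,\epsilon'}}$.

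The heart of the argument, and the step I expect to be the main obstacle, is the \emph{refold}: recognising the resulting sum of bulk states as the sum over $\mathsf{B}_{\boldsymbol{\epsilon,\epsilon'}}$ of the boundary states $\underline{\widehat{\mathcal B}}_{-,M+\tilde m}(\{\mu_i\}_{i\notin\mathsf{B}}|\alpha-\beta+1-2\tilde m)\,\ket{\eta,\alpha+\beta+N-1-2M}$ that appear in \eqref{act-boundary}. Concretely, I would apply Proposition~\ref{prop-new-BB} in reverse to each such target boundary state — legitimate precisely because $\gamma+\delta$ is independent of $\alpha,\beta$ and of the number of operators, so the same reference $\ket{\eta,\gamma+\delta}$ reappears — and match the two bulk expansions. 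This matching requires reorganising the sign sums of the forward decomposition against those produced by the reverse decomposition of the targets, and checking that the boundary scalar factors $\text{\textsc{a}}_{\pm\epsilon_+}$, which are \emph{absent} in the diagonal case of \cite{NicT22} and which here enter only through the weights \eqref{H_sigma}, combine consistently on both sides. Because these factors occur in the identical functional form ($H_\sigma$ of \eqref{H_sigma}) in the forward and backward decompositions, they ride along through the computation, so that the combinatorial bookkeeping of subsets, of the gauge shifts $\alpha\pm\beta$, and of the prefactors is structurally the same as in \cite{NicT22}.

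Finally I would collect all prefactors — the $h_M$ and $h_{M+\tilde m}$ normalisations of \eqref{Bb-h}, the ratios of $a$- and $d$-functions and of $\sinh$-factors produced by the bulk action, and the quotients $H_\sigma/H_1$ — into the single coefficient $\mathcal{\bar F}_{\mathsf{B}_{\boldsymbol{\epsilon,\epsilon'}}}$ displayed in the statement. As in the proof of Proposition~\ref{prop-new-BB}, the delicate point is that ``indirect'' contributions, in which a free rapidity $\mu_i$ would be exchanged against an inserted $\xi_j^{(1)}$ in a way not recorded by $\mathsf{B}$, must cancel in the sign sums; I would expect this to follow from a null-sum identity of the same type as \eqref{Null-sum}, enforced again by the choice \eqref{choice-delta} of $\delta$. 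Verifying this cancellation, together with the exact sign and gauge-shift bookkeeping needed to produce $\mathcal{\bar F}$ in the stated form, is where the real work lies.
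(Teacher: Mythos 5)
Your proposal follows exactly the paper's own argument: observe that the constraint \eqref{Cond-BB} makes $\alpha+\beta+N-2M-1=\gamma+\delta$ so that Proposition~\ref{prop-new-BB} applies, decompose the boundary state into gauged bulk Bethe states, act with Theorem~4.1 of \cite{NicT22}, and use Proposition~\ref{prop-new-BB} again to refold the result into boundary states (which also removes the apparent dependence on $\gamma-\delta$). This is the same approach as the paper, with your discussion of the sign-sum matching simply fleshing out details the paper leaves implicit.
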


\begin{proof}
Noticing that the constraint \eqref{Cond-BB} is compatible with the choice \eqref{choice-gamma}-\eqref{choice-delta} of $\gamma$ and $\delta$ such that
\begin{equation}
   \alpha+\beta+N-2M-1=\gamma+\delta,
\end{equation}
we can use the boundary-bulk decomposition of Proposition \ref{prop-new-BB} to decompose the gauge boundary Bethe state \eqref{BoundarySS} into  gauged bulk Bethe state of the form
\begin{equation}\label{bulk-states}
   B(\lambda^{( \sigma ) }_1 |x -1,\alpha-\beta )\cdots B(\lambda^{( \sigma ) }_M |x-M,\alpha -\beta +M-1)\, \ket{\eta,\alpha+\beta+N-M-1}
\end{equation}
with $x=\gamma-\delta$. We then use Theorem~4.1 of \cite{NicT22} to compute the action of $\prod_{n=1}^{m}E_{n}^{\epsilon _{n}^{\prime },\epsilon _{n}}(\xi_{n}|(a_{n},b_{n}),(\bar{a}_{n},\bar{b}_{n}))$ on the bulk states \eqref{bulk-states} and use again  Proposition~\ref{prop-new-BB} to reconstruct boundary states. By doing this, we get rid of the apparent dependance on $\gamma-\delta$.
\end{proof}

\begin{rem}
The action presented in Theorem \ref{th-act-boundary} formally coincides with the action that we have computed in Theorem 4.2 of \cite{NicT22}. However, one has to remark that there it was
derived  under the very strong boundary condition: 
\begin{equation}
K_{-,0}(\lambda ;\varsigma _{+}=-\infty ,\kappa _{+},\tau _{+})=%
\begin{pmatrix}
e^{(\eta /2-\lambda )} & 0 \\ 
0 & e^{(\lambda -\eta /2)}%
\end{pmatrix}%
_{0}=e^{(\eta /2-\lambda )\sigma _{0}^{z}},  \label{Sp-B+}
\end{equation}
which completely fixes the boundary field at site $N$. Here, instead, we are leaving the
three boundary parameters $\varsigma _{+}$, $\kappa _{+}$, $\tau _{+}$ completely
arbitrary. The price to pay is that the gauge parameters $\alpha$ and $\beta$ need to satisfy the constraint \eqref{Cond-BB}, whereas in \cite{NicT22} they could be taken arbitrary. It happens that the boundary-bulk decomposition here derived in Proposition \ref{prop-new-BB} shows that, even under general
boundary parameters, the gauged transformed boundary Bethe states admit formally, under special choice of the gauge parameters $\gamma$ and $\delta$,
the same boundary-bulk decomposition as in the special boundary case \eqref{Sp-B+}, just the boundary-bulk coefficients $H_{\sigma }(\{\lambda \})$ here account for the dependence from the
current general values of the boundary parameters $\varsigma _{+},\kappa_{+},\tau _{+}$. 
\end{rem}

We recall that, under the constraint \eqref{BC1} on the boundary parameters, any separate state in the Sklyanin's SoV approach can be expressed as a generalized Bethe state of the form \eqref{BoundarySS} with the choice \eqref{Gauge-cond-A'}-\eqref{Gauge-cond-B'} of $\alpha$ and $\beta$, see Proposition~\ref{prop-ABA-SoV}.
Note that the choice \eqref{Gauge-cond-A'}-\eqref{Gauge-cond-B'} together with the constraint  \eqref{BC1} imposes that $\alpha$ and $\beta$ satisfy the constraint \eqref{Cond-BB} with $\epsilon_+=\epsilon_{\varphi_+}$:
\begin{align}
   \eta(\alpha+\beta) &= -\tau_- +\epsilon_{\varphi_-}(\varphi_- -\psi_-)+\frac{1-\epsilon_{\varphi_-}}2 i\pi \mod 2i\pi ,\nonumber\\
   &=-\tau_+ -\epsilon_{\varphi_+}(\varphi_+ +\psi_+)-(N-1-2M)\eta +\frac{1-\epsilon_{\varphi_+}}2 i\pi \mod 2i\pi .
\end{align}
Hence Theorem \ref{th-act-boundary} also provides the action of  a generic element $\prod_{n=1}^{m}E_{n}^{\epsilon _{n}^{\prime },\epsilon _{n}}(\xi_{n}|(a_{n},b_{n}),(\bar{a}_{n},\bar{b}_{n}))$ of the basis \eqref{Local-Basis} on any generic Sklyanin's separate state, provided that the constraint \eqref{BC1} is satisfied. We see that the result can still be expressed as a linear combination of separate states if $\tilde m_{\boldsymbol{\epsilon,\epsilon'}}=0$. 
 
%%%%%%%%%%%%%%%%%%%%%%%%%%%%%%%%%%
\section{Matrix elements of $m$-site local operator strips in the finite chain}
\label{corr-finite}

We now consider the matrix elements of an element of the basis \eqref{Local-Basis} of strips of local operators on the first $m$-sites of the chain, in a generic transfer matrix eigenstate associated with a solution of homogeneous Bethe equations  under the non-diagonal boundary conditions subject to the constraint \eqref{BC1}, that we recall here:
\begin{equation}
\tau _{+}-\tau _{-}=-\epsilon_{\varphi_+}(\varphi _{+}+\psi_{+})-\epsilon_{\varphi_-}(\varphi _{-}-\psi _{-})-(N-1-2M)\eta +\frac{1-\epsilon_{\varphi_+}\epsilon_{\varphi_-}}2 i\pi \mod 2i\pi  \label{BC2}
\end{equation}
for a given $M\in\{1,\ldots,N\}$ and with a given choice of $\epsilon_{\varphi_+},\epsilon_{\psi_+}\in\{+1,-1\}$.
More precisely, we consider a matrix element of the form
\begin{equation}\label{mat-el}
   \bra{\{\lambda\} }\, \prod_{n=1}^{m}E_{n}^{\epsilon _{n}^{\prime },\epsilon _{n}}(\xi_{n}|(a_{n},b_{n}),(\bar{a}_{n},\bar{b}_{n}))\,\ket{\{\lambda \} }
   =\frac{ \bra{Q}\,\prod_{n=1}^{m}E_{n}^{\epsilon _{n}^{\prime},\epsilon _{n}}(\xi _{n}|(a_{n},b_{n}),(\bar{a}_{n},\bar{b}_{n}))\, \ket{Q} }{ \moy{ Q\,|\,Q } },
\end{equation}
in which $\ket{\{\lambda\}}$ and $\bra{\{\lambda\}}$ are the normalized transfer matrix eigenstates associated with a solution $\{\lambda\}\equiv\{\lambda_1,\ldots,\lambda_M\}$ of the homogeneous Bethe equations under the constraint \eqref{BC2}. We denote by $Q$ the associated polynomial of degree $M$ of the form \eqref{Q-form} with roots given by $\{\lambda\}$, solution of the homogeneous $TQ$-equation \eqref{hom-TQ} with the corresponding transfer matrix eigenvalue $\tau$. Hence, in the notations \eqref{mat-el}, $\bra{Q}$ and $\ket{Q}$ denote the SoV eigenstates constructed in the Sklyanin's approach\footnote{In the expression \eqref{mat-el}, $\bra{Q}$ and $\ket{Q}$ may as well denote the SoV eigenstates constructed in the new approach of \cite{MaiN19} since they are proportional to the Sklyanin's ones due to the simplicity of the transfer matrix spectrum. However, when acting on the generalized Bethe state in \eqref{mat-el2}, we obtain a sum over generalized Bethe states with modified arguments which are no longer transfer matrix eigenstates, see Theorem~\ref{th-act-boundary}. Under the condition \eqref{cond-mat-el}, these generalized Bethe states can be re-expressed as Sklyanin's separate states thanks to Proposition~\ref{prop-ABA-SoV}, but a priori not in terms of the type of separate states obtained within the new approach of \cite{MaiN19}. Therefore our computations need in principle to be performed in the generalized Sklyanin's SoV framework. Note however that, if we happen to be in a limiting case in which only the new separation of variables approach is applicable, we can proceed as in \cite{NicT22} by taking limits from regions of the space of parameters in which Sklyanin's SoV is applicable.}.
We fix here $\alpha$ and $\beta$ as in \eqref{Gauge-cond-A'}-\eqref{Gauge-cond-B'}, and $a_n,b_n,\bar a_n,\bar b_n$ are given in terms of $\alpha,\beta,\boldsymbol{\epsilon},
\boldsymbol{\epsilon'}$ as in \eqref{Gauge.Basis-1}-\eqref{Gauge.Basis-2}.

It follows from Proposition~\ref{prop-ABA-SoV} that \eqref{mat-el} can be re-expressed as
\begin{multline}\label{mat-el2}
   \bra{\{\lambda\} }\, \prod_{n=1}^{m}E_{n}^{\epsilon _{n}^{\prime },\epsilon _{n}}(\xi_{n}|(a_{n},b_{n}),(\bar{a}_{n},\bar{b}_{n}))\,\ket{\{\lambda \} }
   =\mathsf{c}^{(R)}_{M,\text{ref}}\, \mathsf{c}^{(R)}_{Q,\textnormal{ABA}}  \\
   \times\frac{ \bra{Q}\,\prod_{n=1}^{m}E_{n}^{\epsilon _{n}^{\prime},\epsilon _{n}}(\xi _{n}|(a_{n},b_{n}),(\bar{a}_{n},\bar{b}_{n}))\, 
   \underline{\widehat{\mathcal{B}}}_{-,M}(\{\lambda_i\}_{i=1}^{M}|\alpha -\beta+1)\,
   \ket{\eta ,\alpha +\beta +N-2M-1} }{ \moy{ Q\,|\,Q } },
\end{multline}
and we can now use Theorem~\ref{th-act-boundary} to act with $\prod_{n=1}^{m}E_{n}^{\epsilon _{n}^{\prime },\epsilon _{n}}(\xi_{n}|(a_{n},b_{n}),(\bar{a}_{n},\bar{b}_{n}))$ on the right in this expression. We restrict here our study to the case in which\footnote{For matrix elements for which \eqref{cond-mat-el} is not satisfied, we encounter the same problem as in 
\cite{NicT22}: the action of Theorem~\ref{th-act-boundary} produces states with shifted gauge parameters and shifted number of roots, that at the moment we do not know how to re-express in a simple form in terms of usual types of separate states.}
\begin{equation}\label{cond-mat-el}
  \sum_{r=1}^{m}(\epsilon _{r}^{\prime }-\epsilon_{r})=0,
\end{equation}
for which the result of this action can again be directly expressed as a sum over Sklyanin's separate states of a similar type:
\begin{equation}\label{mat-el3}
   \bra{\{\lambda\} } \,\prod_{n=1}^{m}E_{n}^{\epsilon _{n}^{\prime },\epsilon _{n}}(\xi_{n}|(a_{n},b_{n}),(\bar{a}_{n},\bar{b}_{n}))\,\ket{\{\lambda \} }
   =\sum_{\mathsf{B}_{\boldsymbol{\epsilon,\epsilon'}}} 
    \mathcal{\bar F}_{\mathsf{B}_{\boldsymbol{\epsilon,\epsilon'}}}(\{\lambda\},\{\xi_j^{(1)}\}_{j=1}^m | \beta)\,
%    \frac{ }{}
        \frac{\mathsf{c}^{(R)}_{Q,\textnormal{ABA}}\, \moy{Q\, |\, \bar Q_{\mathsf{B}_{\boldsymbol{\epsilon,\epsilon'}}} }}{\mathsf{c}^{(R)}_{\bar Q_{\mathsf{B}_{\boldsymbol{\epsilon,\epsilon'}}} ,\textnormal{ABA}} \, \moy{ Q\,|\,Q } }.
%    \underline{\widehat{\mathcal B}}_{-,M+\tilde m_{\boldsymbol{\epsilon,\epsilon'}}}(\{\mu_i\}_{\substack{i=1\\ i\notin\mathsf{B}_{\boldsymbol{\epsilon,\epsilon'}}}}^{M+m}|\alpha-\beta+1-2\tilde m_{\boldsymbol{\epsilon,\epsilon'}})\,    \ket{\eta,\alpha+\beta+N-1-2M}
\end{equation}
Here we have used the notations of Theorem~\ref{th-act-boundary}, and $\bar Q_{\mathsf{B}_{\boldsymbol{\epsilon,\epsilon'}}} $ denotes the polynomial of the form \eqref{Q-form} with roots given by
\begin{equation}
   \{\bar\lambda_1,\ldots,\bar\lambda_M\}\equiv\{\lambda_1,\ldots,\lambda_{N+m}\}\setminus\{\lambda_j\}_{j\in\mathsf{B}_{\boldsymbol{\epsilon,\epsilon'}}}=\{\lambda_a\}_{a\in\alpha_-}\cup\{\xi_{i_b}^{(1)}\}_{b\in\alpha_+},
\end{equation}
in which, as in Theorem~\ref{th-act-boundary}, we have denoted $\xi^{(1)}_{m+1-j}\equiv\lambda_{M+j}$ for $j\in\{1,\ldots,m\}$.
We can then use the formulas obtained in \cite{KitMNT18} to compute the resulting scalar products of separate states.

The technical details of the computation are then completely similar to those performed in \cite{NicT22}, and we therefore obtain the following result, which extends to the case \eqref{BC2} the result stated in Theorem~5.1 of \cite{NicT22}:
%
%\newpage
\begin{theorem}\label{th-mat-el-finite}
Let  the boundary condition \eqref{BC2}, for a given $M\in\{1,\ldots,N\}$ and with a given choice of $\epsilon_{\varphi_+},\epsilon_{\varphi_-}\in\{+1,-1\}$, be satisfied, and let $\{\lambda\}\equiv\{\lambda_1,\ldots,\lambda_M\}$ be a solution of the homogeneous Bethe equations under the constraint \eqref{BC2}. Let $\alpha$ and $\beta$ be given in terms of $\varphi_-,\psi_-$ by  \eqref{Gauge-cond-A'}-\eqref{Gauge-cond-B'}, and let $\boldsymbol{\epsilon}\equiv (\epsilon _{1},\ldots ,\epsilon _{m}),\boldsymbol{\epsilon'}\equiv (\epsilon _{1}^{\prime },\ldots ,\epsilon_{m}^{\prime })\in\{1,2\}^m$ satisfying \eqref{cond-mat-el}.

Then the matrix elements \eqref{mat-el}, where $a_n,b_n,\bar a_n,\bar b_n$ are given in terms of $\alpha,\beta,\boldsymbol{\epsilon},
\boldsymbol{\epsilon'}$ as in \eqref{Gauge.Basis-1}-\eqref{Gauge.Basis-2}, can be written as 
\begin{multline}\label{mat-el-finite}
   \bra{\{\lambda\} } \,\prod_{n=1}^{m}E_{n}^{\epsilon _{n}^{\prime },\epsilon _{n}}(\xi_{n}|(a_{n},b_{n}),(\bar{a}_{n},\bar{b}_{n}))\,\ket{\{\lambda \} }
   \\
   =\sum_{%
\text{\textsc{b}}_{1}=1}^{M}\ldots \sum_{\text{\textsc{b}}_{s}=1}^{M}\sum_{%
\text{\textsc{b}}_{s+1}=1}^{M+m}\ldots \sum_{\text{\textsc{b}}_{m}=1}^{M+m}%
\frac{H_{\{\text{\textsc{b}}_{j}\}}(\{\lambda \}|\beta )}{%
\prod\limits_{1\leq l<k\leq m}\!\!\!\!\sinh (\xi_k-\xi_l)\sinh(\xi_k+\xi_l)},
\end{multline}
with
\begin{multline}\label{H-finite}
H_{\{\text{\textsc{b}}_j\}}(\{\lambda \}|\beta) 
=\prod_{n=1}^m\frac{e^\eta}{\sinh(\eta b_n)}
\sum\limits_{\sigma _{\text{\textsc{b}}_{j}}}\frac{(-1)^{s}
\prod\limits_{i=1}^{m}\sigma _{\text{\textsc{b}}_{i}}\prod\limits_{i=1}^{m}
\prod\limits_{j=1}^{m}\sinh (\lambda _{\text{\textsc{b}}_{i}}^{\sigma }+\xi _{j}+\eta/2)}
{\prod\limits_{1\leq i<j\leq m}\sinh (\lambda^\sigma_{\text{\textsc{b}}_i}-\lambda_{\text{\textsc{b}}_j}^\sigma-\eta )\sinh (\lambda_{\text{\textsc{b}}_{i}}^\sigma+\lambda_{\text{\textsc{b}}_{j}}^\sigma+\eta )}  
   \\
 \times\prod\limits_{p=1}^{s}\bigg\{
\sinh(\lambda _{\text{\textsc{b}}_p}^\sigma-\xi _{i_p}^{(1)}-\eta(1+b_{i_p}))
\prod\limits_{k=1}^{i_{p}-1}\sinh(\lambda _{\text{\textsc{b}}_{p}}^{\sigma }-\xi _{k}^{( 1)})\prod\limits_{k=i_{p}+1}^{m}\!\!\sinh (\lambda _{\text{\textsc{b}}_{p}}^{\sigma }-\xi _{k}^{\left( 0\right) })\bigg\} 
   \\
 \times \!\!\prod\limits_{p=s+1}^{m}\bigg\{
\sinh (\lambda _{\text{\textsc{b}}_{p}\,}^{\sigma }-\xi _{i_{p}}^{(1) }+\eta(1-\bar b _{i_{p}}))
\prod\limits_{k=1}^{i_{p}-1}\sinh (\lambda _{\text{\textsc{b}}_{p}}^\sigma -\xi _{k}^{(1)})
\prod\limits_{k=i_{p}+1}^{m}\!\!\sinh (\lambda _{\text{\textsc{b}}_{p}}^{\sigma }-\xi _{k}^{(1)}+\eta )\bigg\}
  \\
 \times\prod\limits_{k=1}^{m}
\frac{\sinh (\xi_k-\epsilon _{\varphi_{-}}\varphi _{-})\cosh (\xi_k+\epsilon_{\varphi_-}\psi_-)}{\sinh (\lambda _{\text{\textsc{b}}_k}^\sigma -\epsilon _{\varphi_{-}}\varphi _{-}+\eta /2)\cosh(\lambda _{\text{\textsc{b}}_k}^\sigma +\epsilon_{\varphi_-}\psi_-+\eta/2)} \
\det_{m}\Omega .
\end{multline}
Here we have used the notations \eqref{i_p-s}-\eqref{i_p-s'}, and the sum is performed over all $\sigma _{\text{\textsc{b}}%
_{j}}\in \{+,-\}$ for \textsc{b}$_{j}\leq M$, and $\sigma _{\text{\textsc{b}}%
_{j}}=1$ for \textsc{b}$_{j}>M$.
Finally,  the $m\times m$ matrix $\Omega $ is given by
\begin{alignat}{2}
& \Omega _{lk}=-\delta _{N+m+1-b_{l},k},\quad & & \text{for \textsc{b}}%
_{l}>M, \\
& \Omega _{lk}=\sum_{a=1}^{M}\left[ \mathcal{N}^{-1}\right] _{\text{\textsc{b}}_{l},a}%
\mathcal{M}_{a,k}, %\mathcal{R}_{\text{\textsc{b}}_{l},k},
\quad & & \text{for \textsc{b}}_{l}\leq M,
\end{alignat}
where we have defined
\begin{align}
&  [\mathcal{N}]_{j,k}\equiv [\mathcal{N}(\boldsymbol{\lambda})]_{j,k}=2N\,\delta _{j,k}\,\Xi _{%
\boldsymbol{\varepsilon},Q}^{\prime }(\lambda _{j})+2\pi \big[K(\lambda
_{j}-\lambda _{k})-K(\lambda _{j}+\lambda _{k})\big], \label{mat-N}\\
& [\mathcal{M}]_{j,k}\equiv [\mathcal{M}(\boldsymbol{\bar \lambda,\lambda})]_{j,k}=%
\begin{cases}
\lbrack \mathcal{N}(\boldsymbol{\lambda})]_{j,k} & \text{if }k\in \alpha
_{-}, \\ 
i[t(\xi _{i_{k}}-\lambda _{j})-t(\xi _{i_{k}}+\lambda _{j})] & \text{if }%
k\in \alpha _{+},%
\end{cases}
\label{mat-M}
\end{align}
in terms of
\begin{align}
& \Xi _{\boldsymbol{\varepsilon},Q}^{\prime }(\mu )\left. =\right. \frac{i}{%
2N}\frac{\partial }{\partial \mu }\left( \log \frac{\mathbf{A}_{%
\boldsymbol{\varepsilon}}(-\mu )\,Q(\mu +\eta )}{\mathbf{A}_{%
\boldsymbol{\varepsilon}}(\mu )\,Q(\mu -\eta )}\right), \\
& K(\lambda )\left. =\right. \frac{i\sinh (2\eta )}{2\pi \,\sinh (\lambda
+\eta )\,\sinh (\lambda -\eta )}\left. =\right. \frac{i}{2\pi }\big[%
t(\lambda +\eta /2)+t(\lambda -\eta /2)\big], \\
& t(\lambda )\left. =\right. \frac{\sinh \eta }{\sinh (\lambda -\eta
/2)\,\sinh (\lambda +\eta /2)}\left. =\right. \coth (\lambda -\eta /2)-\coth
(\lambda +\eta /2).
\end{align}
\end{theorem}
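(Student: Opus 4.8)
The plan is to follow the reduction chain already assembled in this section and then to import the scalar-product determinants of \cite{KitMNT18}, so that the argument runs in parallel with the proof of Theorem~5.1 of \cite{NicT22}. The starting point is \eqref{mat-el3}: using Proposition~\ref{prop-ABA-SoV} to rewrite the normalized eigenstates as generalized Bethe states (this gives \eqref{mat-el2}) and then Theorem~\ref{th-act-boundary} to act with the local-operator monomial on the right, one expresses the matrix element as a finite sum over the index sets $\mathsf{B}_{\boldsymbol{\epsilon,\epsilon'}}$ of the action coefficient $\mathcal{\bar F}_{\mathsf{B}_{\boldsymbol{\epsilon,\epsilon'}}}$ of \eqref{act-bound-coeff} times the ratio $\moy{Q\,|\,\bar Q_{\mathsf{B}_{\boldsymbol{\epsilon,\epsilon'}}}}/\moy{Q\,|\,Q}$. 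Crucially, the hypothesis \eqref{cond-mat-el} forces $\tilde m_{\boldsymbol{\epsilon,\epsilon'}}=0$, so that each term is again a Sklyanin separate state $\ket{\bar Q_{\mathsf{B}_{\boldsymbol{\epsilon,\epsilon'}}}}$ associated with a polynomial of the form \eqref{Q-form} of the same degree $M$, whose roots are the complementary set $\{\lambda_a\}_{a\in\alpha_-}\cup\{\xi^{(1)}_{i_b}\}_{b\in\alpha_+}$.

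First I would evaluate the two scalar products. Since $\bra{Q}$ is a genuine eigenstate (its roots solve the homogeneous Bethe equations under \eqref{BC2}) while $\ket{\bar Q_{\mathsf{B}}}$ is an arbitrary separate state, the overlap $\moy{Q\,|\,\bar Q_{\mathsf{B}}}$ is given by the Slavnov-type determinant of \cite{KitMNT18} and the norm $\moy{Q\,|\,Q}$ by the associated Gaudin-type determinant, both of size $M$. The key structural step is to reduce their ratio to the single $m\times m$ determinant $\det_m\Omega$ of the statement. This uses the linear-algebra manipulations of \cite{NicT22}, exploiting that $Q$ and $\bar Q_{\mathsf{B}}$ share the roots $\{\lambda_a\}_{a\in\alpha_-}$: the rows of $\Omega$ labelled by $\mathsf{B}$-elements $\le M$ acquire the Gaudin-inverse weight $[\mathcal N^{-1}]$ of \eqref{mat-N}, those labelled by $\mathsf{B}$-elements $>M$ collapse to the Kronecker-delta rows $-\delta_{N+m+1-b_l,k}$, and the modified matrix $\mathcal M$ of \eqref{mat-M} encodes, through its $t$-function columns indexed by $\alpha_+$, the replacement of each removed Bethe root by the inhomogeneity shift $\xi^{(1)}_{i_k}$.

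It then remains to recombine $\mathcal{\bar F}_{\mathsf{B}}$ with the prefactors of the scalar-product determinants and to reorganize the summation. I would reindex the single sum over $\mathsf{B}=\{\text{\textsc{b}}_1,\ldots,\text{\textsc{b}}_m\}$ as the iterated sum of \eqref{mat-el-finite}, with ranges dictated by \eqref{i_p-s}-\eqref{i_p-s'}, and carry through the sum over the signs $\sigma_{\text{\textsc{b}}_j}$ (trivial for $\text{\textsc{b}}_j>M$). The boundary-dependent factors from $H_\sigma$ of \eqref{H_sigma}, the $d(\mu^\sigma_{\text{\textsc{b}}_j})$ factors, and the SoV normalization $N(\{\xi\})$ must be collected and simplified against the Cauchy-type denominators of the scalar-product formula; in particular the denominator products $\sinh(\lambda^\sigma_{\text{\textsc{b}}_k}-\epsilon_{\varphi_-}\varphi_-+\eta/2)\cosh(\lambda^\sigma_{\text{\textsc{b}}_k}-\epsilon_{\varphi_-}\psi_-+\eta/2)$ in the last line of \eqref{H-finite} are the $-$-boundary factors of $\mathbf a_{\boldsymbol\varepsilon}$ in \eqref{a_eps} surviving the reduction once $\alpha,\beta$ are fixed by \eqref{Gauge-cond-A'}-\eqref{Gauge-cond-B'}, while the overall denominator $\prod_{l<k}\sinh(\xi_k-\xi_l)\sinh(\xi_k+\xi_l)$ is produced by the Vandermonde factor $\widehat V$ of \eqref{Ortho-norm}.

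The hard part is entirely the bookkeeping of this last step: tracking every gauge- and boundary-parameter-dependent prefactor through the Schur-type reduction and checking that they recombine exactly into \eqref{H-finite}. Because the determinant skeleton $\det_m\Omega$ is identical to that of \cite{NicT22}, the only genuinely new content is that the general boundary matrix $K_-$ enters solely through the coefficients $H_\sigma(\{\lambda\}|\epsilon_+\varphi_+,\epsilon_+\psi_+)$ of \eqref{H_sigma} and through the explicit boundary-parameter factors in \eqref{H-finite}; once one verifies that these slot into the same skeleton, the computation proceeds verbatim and yields \eqref{mat-el-finite}-\eqref{H-finite}.
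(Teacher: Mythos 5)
Your proposal follows exactly the reduction chain the paper itself uses: rewrite the eigenstates as generalized Bethe states via Proposition~\ref{prop-ABA-SoV} (giving \eqref{mat-el2}), act with the local-operator monomial via Theorem~\ref{th-act-boundary} under the condition \eqref{cond-mat-el} so that \eqref{mat-el3} is a sum over Sklyanin separate states, and then evaluate the resulting ratios of scalar products with the determinant formulas of \cite{KitMNT18}, reducing them to the $m\times m$ determinant $\det_m\Omega$ exactly as in Theorem~5.1 of \cite{NicT22}. This is the same approach as the paper, which itself defers the remaining bookkeeping to the ``completely similar'' computation of \cite{NicT22}; your sketch is a correct and somewhat more explicit account of that computation.
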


%%%%%%%%%%%%%%%%%
\section{On correlation functions in the half-infinite chain}
\label{corr-infinite}

As mentioned above, the description of the transfer matrix spectrum and eigenstates in terms of the homogeneous $TQ$-equation \eqref{hom-TQ} for $Q$ of the form \eqref{Q-form} is not complete under the constraint \eqref{BC2}. However, it was argued in \cite{NepR03} that, at least for the range of boundary parameters numerically investigated there, the corresponding Bethe equations,
\begin{equation}\label{eq-Bethe1}
\mathbf{A}_{\boldsymbol{\varepsilon}}(\lambda_j)\,  Q(\lambda_j-\eta)
   +\mathbf{A}_{\boldsymbol{\varepsilon}}(-\lambda_j)\,  Q(\lambda_j+\eta)=0,  \qquad j=1,\ldots,M,
\end{equation}
which can be rewritten as, for $\lambda_j\not= 0, i\frac \pi 2$,
\begin{multline}\label{eq-Bethe2}
  \frac{a(-\lambda_j)\, d(\lambda_j)}{a(\lambda_j)\, d(-\lambda_j)}\,
  \prod_{\sigma=\pm}\frac{\sinh(\lambda_j+\frac\eta 2-\epsilon_{\varphi_\sigma}\varphi_\sigma)\, \cosh(\lambda_j+\frac\eta 2 -\sigma\epsilon_{\varphi_\sigma}\psi_\sigma)}{\sinh(\lambda_j-\frac\eta 2+\epsilon_{\varphi_\sigma}\varphi_\sigma)\, \cosh(\lambda_j-\frac\eta 2 +\sigma\epsilon_{\varphi_\sigma}\psi_\sigma)}
  \\
  \times
  \prod_{\substack{k=1 \\ k\not = j}}^M \frac{\sinh(\lambda_j-\lambda_k+\eta)\,\sinh(\lambda_j+\lambda_k+\eta)}{\sinh(\lambda_j-\lambda_k-\eta)\,\sinh(\lambda_j+\lambda_k-\eta)}=1,
  \qquad j=1,\ldots,M,
\end{multline}
yield the ground state in the sector
\begin{equation}\label{M-GS1}
  M=\left\lfloor \frac N2\right\rfloor .
\end{equation}
It was  more generally conjectured in \cite{NepR04} that, when combining the solutions in the sector $M$ with the constraint \eqref{BC2} with a given choice of $\epsilon_{\varphi_+},\epsilon_{\varphi_-}$ and the solutions in the sector $M'=N-1-M$ with a constraint in which the signs are negated, one gets the complete spectrum. If this conjecture is true, this means in particular that, under the constraint \eqref{BC2}, the ground state can be found in one the two sectors $(M,\boldsymbol{\varepsilon})$ or $(M',\boldsymbol{-\varepsilon})$, and can be described by usual Bethe equations. Note that, if $N-2M$ remains finite in the thermodynamic limit, so is $N-2M'$, 
%provided we take into account the aforementioned change of sign of the boundary parameters between the two sectors) means that, under the corresponding constraint, one may also obtain the ground state from a solution of the above Bethe equations in a sector
%
%\begin{equation}\label{M-GS2}
%  M=\left\lfloor \frac N2\right\rfloor -k ,
%\end{equation}
%
%where $k$ is some integer. As far as $k$ remains finite, 
and we can consider the half-infinite chain limit (thermodynamic limit $N\to \infty$) while maintaining the constraint \eqref{BC2}.

%As already said, we will not discuss the validity of these conjectures here, but we will simply assume that we are  in a configuration of the boundaries so that the ground state can indeed be characterized by such a solution of the Bethe equations \eqref{eq-Bethe2} in the case \eqref{M-GS1} or \eqref{M-GS2} with $k$ finite. 
If we are in such a situation, then the analysis concerning the distribution of Bethe roots in the ground state can be performed completely similarly as in the diagonal case \cite{SkoS95,KapS96,GriDT19}. Hence, the ground state should be characterized in the homogeneous and thermodynamic limits by a infinite number (i.e. of order $N/2$) of Bethe roots distributed on an interval $(0,\Lambda)$  according to a density function $\rho(\lambda)$, with
\begin{equation}\label{rho}
   \rho(\lambda)
   =\left\lbrace\,
   \begin{array}{@{}l@{\quad}l@{\quad}l@{\quad}l@{}}
   \displaystyle
   \frac{1}{\zeta\,\cosh(\pi\lambda/\zeta)} &\text{with} \ \ \zeta=i\eta>0  &\text{and} \  \ \Lambda=+\infty &\text{if}\ \ |\Delta|<1,
   \vspace{1mm}\\
   \displaystyle 
   \frac i\pi\frac{\vartheta'_1(0,q)}{\vartheta_2(0,q)}\frac{\vartheta_3(i\lambda,q)}{\vartheta_4(i\lambda,q)}  
                 &\text{with} \ \ q=e^\eta \ (\eta<0) &\text{and} \  \ \Lambda=-i\pi/2\  &\text{if}\ \ \Delta>1.
   \end{array}\right.
\end{equation}
This density function is solution of the following integral equation:
\begin{equation}\label{int-rho}
    \rho(\lambda)+\int_{0}^\Lambda \big[ K(\lambda-\mu)+K(\lambda+\mu)\big]\,\rho(\mu)\,d\mu=\frac{it(\lambda)}\pi.
\end{equation}
which can be extended by parity on the whole interval $(-\Lambda,\Lambda)$ as
\begin{equation}\label{int-rho-ext}
    \rho(\lambda)+\int_{-\Lambda}^\Lambda K(\lambda-\mu)\,\rho(\mu)\,d\mu=\frac{it(\lambda)}\pi.
\end{equation}
Depending on the configuration of the boundary parameters and on the precise number of Bethe roots $\left\lfloor \frac N2\right\rfloor -k$ (with $k$ remaining finite in the thermodynamic limit) characterising the ground state, the fine structure of the ground state may also include %a finite number of holes and/or  
a finite number of "complex" roots\footnote{We may also possibly have a finite number of "holes" in the distribution of "real" roots. The latter should however not contribute to the leading order of the result for the correlation functions in the thermodynamic limit.}, i.e. of roots that are not on the real axis in the regime $|\Delta|<1$ nor on the imaginary axis in the regime $\Delta>1$. In particular, it may contain some isolated "complex" roots of the form
\begin{equation}\label{BR}
   \check{\lambda}_{\sigma,1}=\eta/2-\epsilon_{\varphi_\sigma}\varphi_\sigma+\varepsilon_{\sigma,1},
   \qquad
   \check{\lambda}_{\sigma,2}=\eta/2-\sigma\epsilon_{\varphi_\sigma}\psi_\sigma+i\frac\pi 2+\varepsilon_{\sigma,2},
   \qquad \sigma=+,-,
\end{equation}
which converge towards the poles of the boundary factor in \eqref{eq-Bethe2} with exponentially small corrections $\varepsilon_{\sigma,i}$ in $N$: these roots play an important role in the computation of the correlation functions in the thermodynamic limit (see \cite{KitKMNST07,GriDT19}), and we call them boundary roots.

%As already said, we will not discuss the validity of the aforementioned  conjectures here, but 
Although these are interesting problems, we shall not discuss here the validity of the aforementioned  conjectures, nor the fine structure of the ground state according to the precise configuration of the boundary fields. 
Instead, we shall simply assume that we are  in a configuration of the boundaries so that the ground state can be characterized, in the thermodynamic limit, by an infinite set of Bethe roots on $(0,\Lambda)$ distributed according to the density function \eqref{rho} solution of \eqref{int-rho}-\eqref{int-rho-ext}, with possibly some additional boundary roots of the form \eqref{BR}.
%such a solution of the Bethe equations \eqref{eq-Bethe2} in the case \eqref{M-GS1} or \eqref{M-GS2} with $k$ finite. 
Note that, although all types of boundary roots \eqref{BR} may be present in the description of the ground state, only the presence of the boundary roots $\check{\lambda}_{-,1},\check{\lambda}_{-,2}$  results into a non-zero direct contribution to the final result for the correlation functions around site 1 in the thermodynamic limit\footnote{However, the fact that the set of Bethe roots for the ground state contains the boundary roots $\check{\lambda}_{-,1}$ and/or $\check{\lambda}_{-,2}$ may also depend on the boundary parameters at site $N$, see \cite{GriDT19}.}, due to the presence of the singularities  in the expression \eqref{H-finite}. Hence, we have here a priori to distinguish among the four following possible regimes\footnote{For instance, the numerical results of Tables 2 and 4 of  \cite{NepR03} seem to indicate that the numerical values of the boundary parameters considered in that paper correspond to regime A).}:
\begin{itemize}
 \item[A)] the set of Bethe roots for the ground state does not contain neither $\check{\lambda}_{-,1}$ nor $\check{\lambda}_{-,2}$;
 \item[B)] the set of Bethe roots for the ground state contains $\check{\lambda}_{-,1}$ but not $\check{\lambda}_{-,2}$;
 \item[C)] the set of Bethe roots for the ground state contains $\check{\lambda}_{-,2}$ but not $\check{\lambda}_{-,1}$;
 \item[D)] the set of Bethe roots for the ground state contains both $\check{\lambda}_{-,1}$ and $\check{\lambda}_{-,2}$.
\end{itemize}

Under such assumptions,  we can proceed as in \cite{KitKMNST07,NicT22} to compute the thermodynamic limit of the expression \eqref{mat-el-finite}-\eqref{H-finite} for the ground state, and we obtain the following result for the mean value in the ground state of an element of the basis \eqref{Local-Basis} under the condition \eqref{cond-mat-el} and under the hypothesis of Theorem~\ref{th-mat-el-finite}:
\begin{multline}\label{result-thermo}
   \moy{ \prod_{n=1}^{m}E_{n}^{\epsilon _{n}^{\prime },\epsilon _{n}}(\xi_{n}|(a_{n},b_{n}),(\bar{a}_{n},\bar{b}_{n})) }
     =\prod_{n=1}^m\frac{e^\eta}{\sinh(\eta b_n)}\,
    \frac{(-1)^{s}}{\prod\limits_{j<i}\sinh (\xi _i-\xi_j)\prod\limits_{i\leq j}\sinh (\xi_i+\xi_j)}   \\
 \times \int_{\mathcal{C}}\prod_{j=1}^{s}d\lambda _{j}\ \int_{\mathcal{C}_{\boldsymbol{\xi}}}
\prod_{j=s+1}^{m}\!\!d\lambda _{j}\ 
H_{m}(\{\lambda _{j}\}_{j=1}^M;\{\xi _{k}\}_{k=1}^m)\ \det_{1\leq j,k\leq m}\big[\Phi(\lambda_j,\xi_k)\big],
\end{multline}
with
\begin{multline}\label{H-thermo}
H_{m}(\{\lambda _{j}\}_{j=1}^M;\{\xi _{k}\}_{k=1}^m)
 =\frac{\prod\limits_{j=1}^{m}\prod\limits_{k=1}^{m}\sinh (\lambda _{j}+\xi_k+\eta/2)}
   {\!\!\!\!\prod\limits_{1\leq i<j\leq m}\!\!\!\!\sinh (\lambda_i-\lambda_j-\eta )\,\sinh (\lambda_i+\lambda_j+\eta )}
    \\
 \times 
 \prod\limits_{p=1}^{s}\bigg\{
 \sinh (\lambda_p-\xi _{i_{p}}^{(1)}-\eta (1+b_{i_p}))
 \prod\limits_{k=1}^{i_{p}-1}\sin(\lambda _{p}-\xi _{k}^{(1)})
 \prod\limits_{k=i_{p}+1}^{m}\!\!\sinh (\lambda_{p}-\xi _{k}^{(1)}-\eta )\bigg\} 
 \\
 \times 
 \!\!\prod\limits_{p=s+1}^{m}\bigg\{
 \sinh (\lambda_p-\xi _{i_{p}}^{(1)}+\eta(1-\bar b_{i_p}))
 \prod\limits_{k=1}^{i_{p}-1}\sinh (\lambda _{p}-\xi _{k}^{(1)})
 \prod\limits_{k=i_{p}+1}^{m}\!\!\sinh (\lambda_{p}-\xi _{k}^{(1)}+\eta )\bigg\} 
   \\
   \times
   \prod\limits_{k=1}^{m}
\frac{\sinh (\xi_k-\epsilon _{\varphi_{-}}\varphi _{-})\cosh (\xi_k+\epsilon_{\varphi_-}\psi_-)}{\sinh (\lambda _{\text{\textsc{b}}_k}^\sigma -\epsilon _{\varphi_{-}}\varphi _{-}+\eta /2)\cosh(\lambda _{\text{\textsc{b}}_k}^\sigma +\epsilon_{\varphi_-}\psi_-+\eta/2)},
%   \prod\limits_{k=1}^{m}\frac{\sinh (\xi _{k}-\bar{\varsigma}_{+}^{(D)})\,\sinh (\xi _{k}-\bar{\varsigma}_{-}^{(D)})}{\sinh(\lambda _{k}-\bar{\varsigma}_{+}^{(D)}+\eta /2)\, \sinh (\lambda _{k}-\bar{\varsigma}_{-}^{(D)}+\eta /2)} ,
\end{multline}
and
\begin{equation}\label{mat-Phi}
\Phi(\lambda_j,\xi_k)=\frac{1}{2}\big[\rho (\lambda _{j}-\xi _{k})-\rho (\lambda_{j}+\xi _{k})\big].
\end{equation}
The integration contours are defined as
\begin{equation}\label{contour-C}
  \mathcal{C} =  \begin{cases}
   [-\Lambda ,\Lambda ] &\text{in the regime A),}\\
   [-\Lambda ,\Lambda ] \cup\Gamma^-( -\frac\eta 2+\epsilon_{\varphi_-}\varphi_-) &\text{in the regime B),}\\
   [-\Lambda ,\Lambda ] \cup\Gamma^-(- \frac\eta 2-\epsilon_{\varphi_-}\psi_-+i\frac\pi 2 ) &\text{in the regime C),}\\
    [-\Lambda ,\Lambda ] \cup\Gamma^-( -\frac\eta 2+\epsilon_{\varphi_-}\varphi_- , -\frac\eta 2-\epsilon_{\varphi_-}\psi_-+i\frac\pi 2 ) &\text{in the regime D),}
   \end{cases}
\end{equation}
and
\begin{equation}\label{C-xi}
   \mathcal{C}_{\boldsymbol{\xi}} =\mathcal{C}\cup \Gamma^- (\xi_1^{(1)},\ldots,\xi_m^{(1)}),
\end{equation}
where $\Gamma^- (\mu_1,\ldots,\mu_p)$ denotes a contour surrounding the points of $\mu_1,\ldots,\mu_p$  with index $-1$, all other poles of the integrand being outside.

%%%%%%%%%%%%%%%%%%
\section{Conclusion and Outlooks}

In this paper we have considered the open XXZ spin 1/2 chain with non-longitudinal boundary fields under one constraint relating the six boundary parameters. Under such a constraint, part of the spectrum and eigenstates of the transfer matrix can be described by a homogeneous $TQ$-equation, i.e. by usual Bethe equations \cite{Nep04}. The transfer matrix eigenstates can be constructed by Separation of Variables as particular separate states whose scalar product formulas are known \cite{KitMNT18}. Moreover, under the constraint, the separate states can be reformulated as generalized Bethe states which are written in terms of a gauged transformed reflection algebra obtained from the usual reflection algebra by a Vertex-IRF transformation.

So as to compute correlation functions, we have here derived the action of a basis of local operators on these generalized
Bethe states under the most general boundary conditions on the site $N$. This gives us, in particular, the action of these local operators on the separate states, and therefore on the transfer matrix eigenstates which can be described by usual Bethe equations under the constraint. From this result and from the knowledge of the scalar product formula between separate states, we were able to derive multiple sum representations for the matrix elements, in these transfer matrix eigenstates,
of the class of local operators which conserve the number of gauged $B$-operators under the action, i.e. which satisfy the constraint \eqref{cond-mat-el}.
Assuming that the ground state is among the eigenstates which can be described by usual Bethe equations in a sector close to half-filling, as conjectured in \cite{NepR03}, and that its corresponding set of Bethe roots is given in the thermodynamic by the usual density function with possibly extra boundary roots, we obtained multiple integral representation for the correlation functions of the aforementioned class of local operators. Our result generalizes those obtained in \cite{KitKMNST07,NicT22} for more restricted boundary conditions.

Several interesting questions remain however open.

As in \cite{NicT22}, we had here to restrict our study to the class of local operators which satisfy the constraint \eqref{cond-mat-el}. If this constraint is not satisfied, the action of the corresponding local operator on generalized Bethe states is expressed in terms of  generalized Bethe states with a different numbers of $B$-operators and a shifted gauge parameter, see \eqref{act-boundary}. The problem is that Proposition~\ref{prop-ABA-SoV} does not enable us to re-express such generalized Bethe states in a simple way in terms of separate states. In other words, we do not know how to compute the resulting scalar products in a compact way after the action. It would be highly desirable, for the consideration of interesting physical correlation functions, to be able to understand how to overcome this difficulty.

Another interesting question concerns the validity of the conjecture of \cite{NepR03,NepR04}, on which we relied to formulate our results.  An analytical proof of this conjecture is still missing, so that the question remains whether it indeed holds for all ranges of boundary parameters satisfying the constraint. Also, an interesting question would be to identify precisely the fine structure of the ground state for the different values of the four  independent boundary parameters appearing in the Bethe equations: identify in particular the boundary roots that are involved in its description, and discuss the resulting phase diagram. Such a proper and precise analysis of the ground state can a priori be done as in \cite{GriDT19} for the diagonal case, but it is more involved since there are much more different cases to distinguish.

Finally, the question of an analytical computation of the correlation functions for completely generic boundary fields, i.e. not satisfying the constraint \eqref{BC1}, still remains widely open. The difficulty comes from the actual description of the spectrum in terms of an inhomogeneous $TQ$-equation as in \eqref{inhom-TQ}, which does not lead to a simple description of the ground state in terms of well-organized Bethe roots as in the homogeneous case \eqref{hom-TQ}. To overcome this problem, it would be necessary either to understand how to deal with such kind of equations within the analytical framework of the correlation functions, or to find an alternative, more convenient, description of the spectrum, maybe in terms of a homogenous $TQ$-equation with non-polynomial $Q$-solutions.

%%%%%%%%%%%%%%%%%%%%%%%%%%%%%%%%%%%%%%%%%%%%%%%%%%%%%%%%%%%%%%%%%%%%%%%%%%%%%%%%%%%%%%%%%%%%%%%%%%%%%%%%%%
\section*{Acknowledgments}

G. N. is supported by CNRS and Laboratoire de Physique, ENS-Lyon. 
V. T. is supported by CNRS.

\bibliographystyle{SciPost_bibstyle}
\bibliography{/Users/vterras/Documents/Dropbox/Bib_files/biblio.bib}

\end{document}